\documentclass[journal,twocolumn]{IEEEtran}
\pdfoutput=1
\usepackage{cite}
\usepackage{amsmath,amssymb,amsfonts}
\usepackage{algorithmic}
\usepackage{graphicx}
\usepackage[usenames]{color}
\usepackage{balance}
\usepackage{textcomp}
\usepackage{verbatim}
\usepackage[overload]{textcase}
\usepackage[]{footmisc}

\usepackage{adjustbox}  
\usepackage{bbm}
\usepackage{amsthm}

\makeatletter

\newtheorem{prop}{Proposition}

\theoremstyle{definition}

\newtheorem{rem}{Remark}

\newcommand{\comm}[1]{}

\newcommand\numberthis{\addtocounter{equation}{1}\tag{\theequation}}

\newcommand{\Rmnum}[1]{\expandafter\@slowromancap\romannumeral #1@}

\newcommand{\FTP}{\mathsf{FTP}}
\newcommand{\ATP}{\mathsf{ATP}}

\def\BibTeX{{\rm B\kern-.05em{\sc i\kern-.025em b}\kern-.08em
T\kern-.1667em\lower.7ex\hbox{E}\kern-.125emX}}
\setlength{\abovedisplayskip}{1.5pt}
\setlength{\belowdisplayskip}{1.5pt}
\setlength{\intextsep}{0pt plus -2pt}

\makeatother

\begin{document}

{\title{ Analysis of D2D Communication with RF Energy Harvesting and Interference Management}}

\author {
\IEEEauthorblockN{
Nasrin Razmi, \IEEEmembership {Student Member,~IEEE},
Mehdi Mahdavi,
Mohammadali Mohammadi, \IEEEmembership {Member,~IEEE},
\\Petar Popovski, \IEEEmembership{Fellow,~IEEE}
}

\thanks{N. Razmi and M. Mahdavi are with the Department of Electrical and Computer Engineering, Isfahan University of Technology, Isfahan, Iran (e-mail: n.razmi@ec.iut.ac.ir; m\_mahdavi@cc.iut.ac.ir).}

\thanks{M. Mohammadi is with the Faculty of Engineering, Shahrekord 115, Iran (e-mail: m.a.mohammadi@sku.ac.ir).}

\thanks{ P. Popovski is with the Department of Electronic Systems, Aalborg University, Denmark (e-mail: petarp@es.aau.dk).}
}
\maketitle

\begin{abstract}
\comm{12}
Device-to-device (D2D) underlaid cellular network,
enabled with radio frequency energy harvesting (RFEH),
and enhanced interference management schemes is a promising
candidate to improve spectral and energy efficiency of next generation wireless networks. In this paper,
we propose a time division duplexing (TDD)-based protocol, in which allows the devices to harvest energy from the downlink transmissions of the base station, while controlling the interference among D2D and cellular communication in the uplink. We propose two schemes for transmission coordination, based on fixed transmission probability (FTP)
and adaptive transmission probability (ATP), respectively. In FTP, the D2D transmitters that have harvested enough energy can initiate data transmission with a fixed probability. Differently from this, in ATP a device utilizes its
sensing capability to get improved coordination and interference control among the transmitting devices. 
We evaluate the network performance by presenting an accurate energy model and leveraging tools from stochastic geometry. The results on outage probability and D2D sum-rate reveal the importance
of transmission coordination on network performance. These observations led to a solution for choosing the parameters of the ATP scheme that achieves an optimal tradeoff between 
the D2D outage probability and number of transmitting users.

\hspace{1ex}

\begin{IEEEkeywords}
Device-to-device communication, cellular network, radio frequency energy harvesting, interference management, outage probability, stochastic geometry.
\end{IEEEkeywords}

\end{abstract}

\section{INTRODUCTION}

Device-to-device~(D2D) communication as an underlay to a cellular network refers to the direct communication between proximate users without going through the base station~(BS).
This feature can improve spectral and energy efficiency, delay and overall throughput \cite{6807945,6163598,8340813}.

However, D2D communication that uses the same spectrum as the cellular network may cause a significant interference, which needs to be dealt with through interference management schemes~\cite{8340813,5910123, 6047553,6736746, 8643973}. 

Another major challenge in future wireless networks is energy consumption~\cite{8766143}, which can be addressed through energy harvesting~(EH), leading eventually to devices that are self-powered~\cite{8869795,sakr2015cognitive}. Radio frequency energy harvesting~(RFEH) enables transceivers to restore energy by converting the received RF signals to electricity \cite{6951347}. RFEH 
is becoming more relevant due to the steady increase of electromagnetic waves in both indoor and outdoor environments at all times \cite{8664000,ulukus2015energy,6575083}. 
Although current harvesting circuits can only afford to save a limited amount of energy, they can be still suitable in a D2D setting due to the low power used for D2D transmissions. 

This paper treats D2D underlaid cellular network that uses RFEH, introduces novel communication schemes and provides a comprehensive analysis based on stochastic geometry.

\subsection{Related Work}
 Stochastic geometry has been widely used to model and analyze the interference of wireless networks \cite{6042301} and it has also been applied to performance analysis of EH-based D2D communication networks~\cite{kusaladharma2017performance,sakr2015cognitive,8069034,7782752,yang2016heterogeneous,7752450}. The authors in \cite{sakr2015cognitive}, studied the performance of EH-based cognitive D2D underlying multi-channel cellular communication, where D2D transmission was considered successful if both the EH and transmission process were successful. 
 The performance of a RFEH-based D2D network which had four different models for EH and transmission was addressed in \cite{kusaladharma2017performance} using Markov chain.
  The probability of harvesting enough energy for D2D users was evaluated in \cite{8069034} where D2D users were able to harvest energy from both BS and power beacons. In \cite{7782752}, a trade-off between the number of D2D transmissions and the amount of harvested energy was formed for EH-based D2D users which had only access to a portion of the cellular spectrum. Authors in \cite{yang2016heterogeneous}, studied the D2D relaying
for EH-based communications. In \cite{7752450}, the energy efficiency of D2D communication underlying multiple-input multiple-output cellular communication with EH from the dedicated power beacons and cellular users' transmissions was evaluated.

Resource allocation in EH-based D2D communication networks has been treated in \cite{8869795,8353158,ding2016dynamic,8536466,7872433}.
The work \cite{8869795} introduced resource management based on deep learning to maximize the sum
rate by controlling the transmission
power and the power splitting ratio. Maximizing the sum throughput using time scheduling and power control was provided in \cite{8353158}. The authors in \cite{ding2016dynamic} minimized the total energy cost of all D2D transmitters and cellular users, while guaranteeing the quality-of-service requirement for D2D and cellular communication by dynamic spectrum allocation. Energy efficiency maximization was also evaluated in \cite{8536466} using game-theoretic learning approach. In this context, joint spectrum resource allocation and power control problems were studied in \cite{7872433} in a simultaneous wireless information and power transfer based D2D network.

\subsection{Our Contribution}
In this paper, we design uplink~(UL) and downlink~(DL) of a time division duplexing~(TDD) protocol in order to support both interference management and RFEH. In this system, in the DL, the BS transmits to the cellular user, while the D2D transmitters remain idle and harvest energy from BS signal. During the UL, a cellular user transmits to the BS, while the D2D transmitters that have sufficient energy can communicate with their corresponding receivers. The main contributions of this work are summarized as follows:
\begin{itemize}
\item{We investigate RFEH-based D2D network under two different assumptions for the D2D transmitters: with and without sensing capabilities, respectively. In this regards, to coordinate the D2D transmissions occurring during the UL period, we propose two different schemes: fixed transmission probability~(FTP) and adaptive transmission probability~(ATP). In FTP, a D2D transmitter with sufficient energy can initiate a transmission with a fixed probability. In ATP, a D2D transmitter senses the channel before transmission in order to reduce the induced interference towards the other D2D links.}
\item {We present an accurate energy model based on the available energy in the batteries of D2D transmitters. By applying this model, we evaluate the performance of the FTP and ATP schemes. }
\item {Using stochastic geometry, the BS and D2D outage probabilities and average achievable D2D sum-rate are derived, providing insights into the impact that the system parameters have on the performance. }
\end{itemize}

The rest of this paper is organized as follows: In Section~\ref{sec:syss}, the system model, including the network model as well as EH model, for the D2D underlaid cellular network is presented. The analytical expressions for D2D and BS outage probabilities and average achievable D2D sum-rate for both FTP and ATP schemes are derived in Section~\ref{sec:Schemes}. Numerical results are described in Section~\ref{sec:num}, followed by conclusions in Section~\ref{sec:conc}.

$Notation$: $\Pr(\cdot)$ denotes the probability; $\mathbb{E}\left[X\right]$ and $f_{X}(x)$ are the expected value and probability density function~(PDF) of random variable $X$, respectively; $\Gamma(a)=\int_0^\infty e^{-x} x^{a-1} dx$ denotes Gamma function defined in \cite[Eq. (6.1.1)]{abramowitz1965handbook}; and $arccos\left(y\right)$ denotes the inverse of cosine function at $y$.

\section{System Model}~\label{sec:syss}

\subsection{Network Model}
We consider a single-cell cellular network underlaid D2D communication consisting of a single BS, a cellular user, and multiple D2D pairs as shown in Fig. \ref{fig:Network model} like \cite{lee2015power, 7797486}. The BS is located at the center of the cell with radius $R$, and the cellular user is randomly located within the cell. Moreover, D2D transmitters, constructing set $\Phi_d$, are distributed according to a homogeneous Poisson Point Process~(PPP) with density $\lambda_d$. Each D2D receiver is placed at a distance of $r_d$ meters from its transmitter with a uniform random direction \cite{lee2015power, 7797486}.

Cellular and D2D communications operate based on the TDD protocol where the DL and UL sub-slots alternate. The time is slotted and each time slot is divided into UL and DL sub-slots, each of them of a duration $T$, see~Fig. \ref{fig:GeneralModelEH}.
In DL, BS transmits to the cellular user, while the D2D transmitters remain idle and only harvest energy from BS transmissions. In an UL sub-slot, the cellular user transmits to the BS, while the D2D transmitters either communicate with their corresponding receivers or remain idle. In the rest of the paper, and without loss of generality, the duration of each UL and DL sub-slot is normalized $T=1$, making it possible to treat power as equivalent to energy.

\begin{figure}
\centering
\includegraphics[width=0.60\columnwidth]{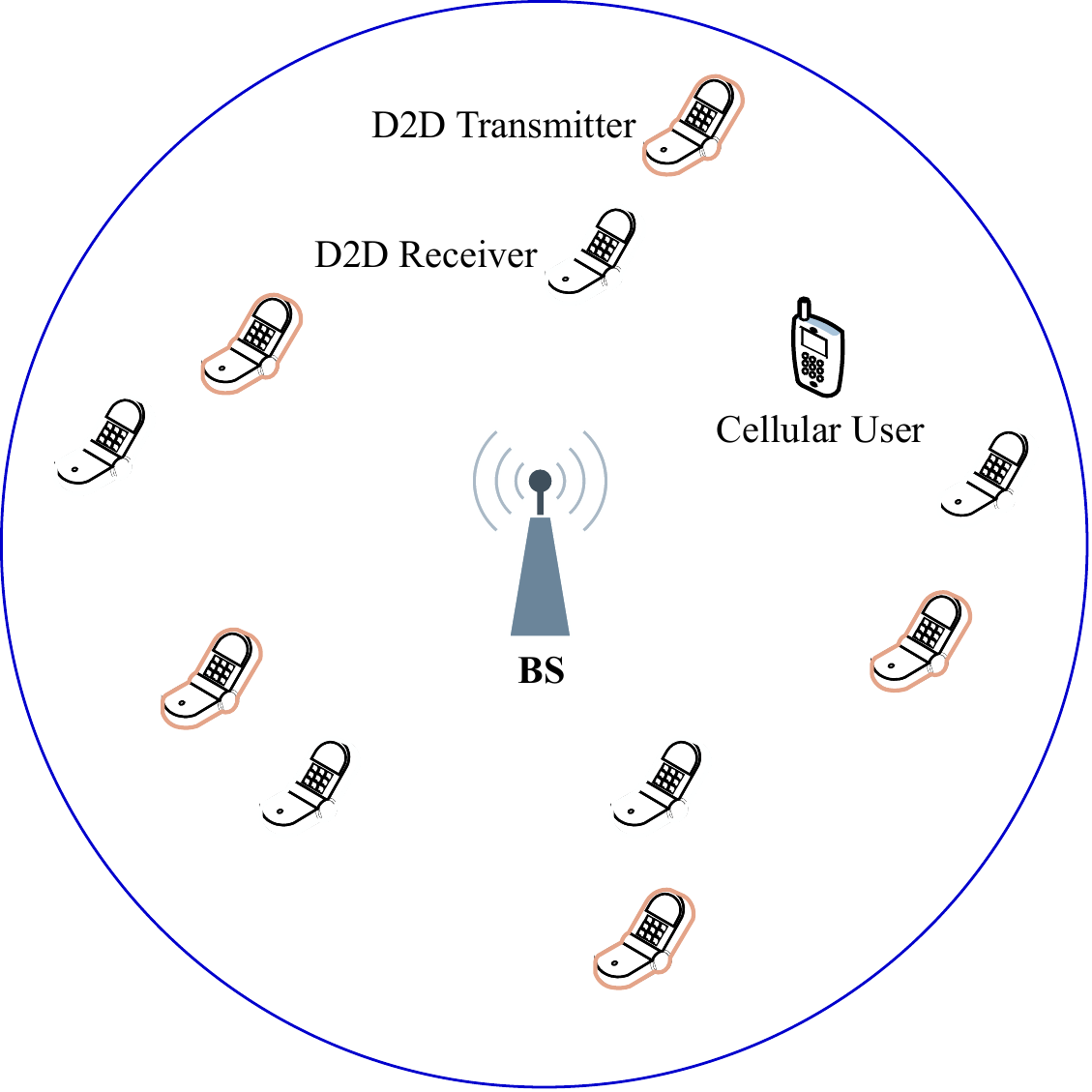}
\caption{Single-cell D2D underlaid cellular network consists of one BS, one cellular user and multiple randomly distributed D2D pairs.}
\label{fig:Network model}
\end{figure}

During the UL sub-slot, the received signal at BS can be expressed as

\begin{align*}
        y_{b}=d_{c,b}^{\frac{-\alpha}{2}}{h_{c,b}}s_c
        +\sum_{ j\in \phi_{t}} d_{j,b}^{\frac{-\alpha}{2}} {h_{j,b}}s_{j}+n_{b},
\numberthis
\label{received signal at BS}
\end{align*}

where $s_c$ is the signal sent by the cellular user and $s_j$ is the 
$j^{th}$ D2D transmitter interfering at the BS. The Rayleigh fading channel from $i^{th}$ D2D transmitter to $j^{th}$ D2D receiver is denoted by $h_{i,j}$ where the channel gain, $|h_{i,j}|^2$, follows the exponential distribution with unit average power. The distance and path loss between $i^{th}$ D2D transmitter and $j^{th}$ D2D receiver are denoted by $d_{i,j}$ and $d_{i,j}^{-{\alpha}}$, respectively where $\alpha$ stands for path loss exponent. Let $\phi_{t}$ denote a subset of all D2D transmitters i.e. $\Phi_d$ that are able to transmit based on the FTP/ATP schemes. Moreover, $n_{b}$ is the zero-mean additive white Gaussian noise~(AWGN) at the BS with noise power $N_0$.

In an UL sub-slot, the received signal by the $i^{th}$ D2D receiver can be expressed as

\begin{align*}
        y_{i} &=r_{d}^{\frac{-\alpha}{2}}{h_{i,i}}s_{i}+\sum_{j\in \phi_{t},{j}\neq{i}} d_{j,i}^{\frac{-\alpha}{2}}h_{j,i}s_{j}+d_{c,i}^{\frac{-\alpha}{2}}{h_{c,i}}s_c+n_{i},
\numberthis
\label{received signal at D2D}
\end{align*}

where $s_i$ is the signal from the desired, while $s_j$ from the undesired D2D transmitter. Moreover, $n_i$ is AWGN at $i^{th}$ D2D receiver with power $N_0$. Note that in \eqref{received signal at D2D}, the first term denotes the desired signal, while the second and third terms represent the interference from other D2D transmitters and the cellular user, respectively.

In the DL, the received signal by the cellular user from BS  can be expressed as

\begin{align*}
        y_{c}=d_{b,c}^{\frac{-\alpha}{2}}{h_{b,c}}s_{b}
        +n_{c},
\numberthis
\label{received signal at cellular user}
\end{align*}

 where $s_{b}$ denotes the transmitted signal by BS and $n_{c}$ is the AWGN at cellular user.
 
 By using \eqref{received signal at BS}, the signal-to-noise plus interference~(SINR) at the BS in the UL sub-slots, can be written as

\begin{equation}
        \Gamma_{b}=\frac{P_{c}{|h_{c,b}|^2}d_{c,b}^{-\alpha}}{\sum_{j\in \phi_{t}} P_d{|{h_{{{j},b}}}|^2}d_{{{j},b}}^{-\alpha}+N_0},
\label{sinr_bs_1}
\end{equation}

where $P_c$ and $P_d$ denote the transmission power of cellular user and D2D transmitters, respectively.

Moreover, by invoking \eqref{received signal at D2D}, the received SINR at $i^{th}$ D2D receiver in the UL sub-slots is given by

\begin{equation}
\begin{aligned}
        \Gamma_{i}\!\!=\!\!\frac{P_d{|h_{i,i}|^2}r_{d}^{-\alpha}}{\sum_{{j}\in \phi_{t} , \newline{j\neq i}} \!P_d|h_{j,i}|^2d_{j,i}^{-\alpha}\!+\!P_c{|h_{c,i}|^2}d_{c,i}^{-\alpha}\!+\!N_0}.
\label{eq:eq4}
\end{aligned}
\end{equation}

\subsection{Energy Harvesting Model}

Each D2D transmitter is equipped with a battery which extracts the energy of the RF signals transmitted by the BS in DL sub-slots using a power conversion circuit \cite{6951347}. Similar to \cite{yang2016heterogeneous,7782752}, infinite battery capacity is assumed for D2D transmitters to buffer the harvested energy.\footnote{Infinite battery capacity assumption helps to simplify the derived equations. However, the results can be extended to the finite capacity model.} Let $E_n^i$ denote the available energy in the battery of $i^{th}$  D2D transmitter at the beginning of $n^{th}$ UL sub-slot as shown in Fig. \ref{fig:GeneralModelEH}.
Based on this model, the available energy in the battery of the $i^{th}$ D2D transmitter can be expressed as \cite{6609136},

\begin{align*}
    E_{n}^{i}=E_{n-1}^{i}+H_{n-1}^{i}-P_d X_{n-1}^i,
    \numberthis
    \label{Eq: general energy model}
\end{align*}

where $H_{n-1}^{i}$ denotes the harvested energy by the $i^{th}$ D2D user at $(n-1)^{th}$ time-slot. $X_{n-1}^{i}$ gets values 0 or 1, according to the proposed transmission scheme in the next section.

\begin{figure}
\centering
\includegraphics[width=0.85\columnwidth]{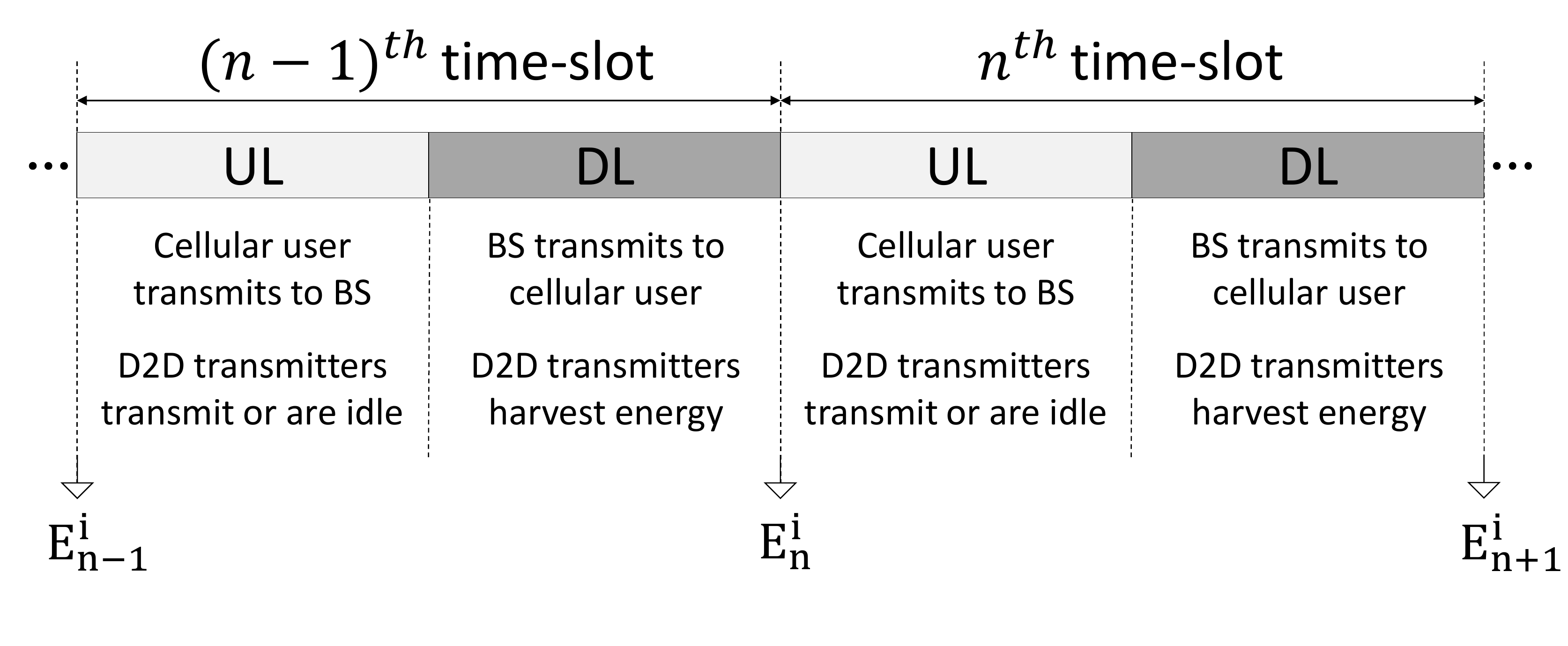}
\caption{Time-slots diagram, UL and DL placement and battery energy level for the $i^{th}$ user.}
\label{fig:GeneralModelEH}
\end{figure}

A D2D transmitter will be operable if its available energy level is greater than or equal to a predefined threshold, denoted by $E_{th}$. The battery of an operable user has the minimum energy necessary for transmission. The average number of D2D transmitters in the considered cell is $\lambda_d \pi R^2 $, where $\pi R^2$ is the cell area. By applying the thinning property of the Poisson process \cite{chiu1996stochastic}, the D2D transmitters which are operable constitute a PPP with density $\lambda_d\pi_{o}$, where $\pi_{o}$ denotes the probability of being operable, analyzed in the next section. The energy harvested in each DL sub-slot for the $i^{th}$ D2D transmitter which is located at the distance $d_{b,i}$ from BS is determined by $H_n^{i}=\eta P_{b} {|h_{b,i}|^2} d_{b,i}^{-\alpha}$ where $P_{b}$ is the transmission power of BS \cite{8333687}. Moreover, $0<\eta\leq1$ denotes the RF to DC power conversion efficiency~\cite{6951347}.

\section{Proposed Transmission Schemes and Performance Evaluation}~\label{sec:Schemes}

In this section, we describe the FTP and ATP schemes. For each scheme, we derive the density of operable D2D transmitters, BS outage probability, D2D outage probability, and average achievable D2D sum-rate. In both schemes only the operable D2D transmitters, whose available energy is $ \geq E_{th}$, will be able to access the channel.

\subsection{Scheme 1: Fixed Transmission Probability~(FTP)}

In the FTP scheme, a fixed transmission probability is equally assigned to all the operable D2D transmitters. Thus, only a subset of operable transmitters can communicate with their corresponding receivers. In order to derive the outage probability expressions and average achievable D2D sum-rate, we need to derive the probability that a D2D transmitter is operable $\pi_{o}^{\FTP}$ and the density of operable and transmitting D2D users, $\lambda_t^{\FTP}$.

According to \eqref{Eq: general energy model}, the energy model for the $i^{th}$ D2D transmitter in FTP scheme can be expressed as

 \begin{align*}
     E_n^i=E_{n-1}^i+H_{n-1}^i-{P_d} X_{n-1,o}^i  X_{n-1,t}^i,
     \numberthis
 \end{align*}

where $X_{n-1,o}^i$ and $X_{n-1,t}^i$ denote the D2D transmitter status of being operable and transmitting, respectively given as

 \begin{align*}
      X_{n-1,o}^i=
      \begin{cases}
        1 &  \hspace{1mm} {\text{if}} \hspace{4mm} E_{n-1}^i\geq E_{th}
      \\0 & \hspace{1mm}  {\text{if}} \hspace{4mm} E_{n-1}^i < E_{th},
      \end{cases}
     \numberthis
 \end{align*}

 and

 \begin{align*}
      X_{n-1,t}^i=\begin{cases} 1 & {\text{with \hspace{1mm} probability}} \hspace{3 mm} p_t^{\FTP} \\0 & {\text{with \hspace{1mm} probability}} \hspace{3 mm} 1-p_t^{\FTP},\end{cases}
     \numberthis
 \end{align*}

where $p_t^{\FTP}$ denotes the transmission probability of an operable D2D transmitter.
Let define the probability of being an operable user, i.e., $\pi_{o}^{\FTP}=\lim_{N\rightarrow \infty} \frac{1}{N}\sum^{N}_{n=0}\mathbb{E}\left[X^i_{n,o}\right]$ \cite{6609136} which is derived in Proposition \ref{prop:pftp}.

\begin{prop}\label{prop:pftp}
 The probability of being an operable D2D transmitter with FTP scheme is derived as

 \begin{align}
\pi_{o}^{\FTP}
=\begin{cases}
        \frac{\alpha \mu_1^{\frac{2}{\alpha}}}{(\alpha-2)R^2}+\frac{2\mu_1 }{(2-\alpha)R^{\alpha}} &  \hspace{1mm} {\text{if}} \hspace{4mm} \left(\frac{\eta P_b}{P_d p_t^{\FTP}}\right)^{\frac{1}{\alpha}}\leq R
      \\1 & \hspace{1mm} {\text{if}} \hspace{4mm} \left(\frac{\eta P_b}{P_d p_t^{\FTP}}\right)^{\frac{1}{\alpha}}> R,
\end{cases}
 \label{energy_ftp}
 \end{align}
 where $\mu_1=\frac{\eta P_b}{P_d p_t^{\FTP}}$.

\end{prop}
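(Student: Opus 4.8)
The plan is to obtain $\pi_{o}^{\FTP}$ in two stages. First I would condition on the distance $r=d_{b,i}$ from the typical D2D transmitter to the BS and compute the conditional operability probability $\pi_o(r)$ through an energy-balance argument; then I would decondition by averaging over $r$. Because the D2D transmitters are uniformly distributed over the disk of radius $R$ centered at the BS, the distance $r$ has density $f_r(r)=2r/R^2$ on $[0,R]$, so that $\pi_{o}^{\FTP}=\int_0^R \pi_o(r)\,\frac{2r}{R^2}\,dr$.

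For the conditional probability I would start from the recursion $E_n^i=E_{n-1}^i+H_{n-1}^i-P_d X_{n-1,o}^i X_{n-1,t}^i$, sum it from $n=1$ to $N$, divide by $N$, and let $N\to\infty$. Under the infinite-battery assumption two regimes appear. If the mean harvested power per slot exceeds the maximum mean drain $P_d\,p_t^{\FTP}$, the battery grows without bound, the threshold $E_{th}$ is eventually always met, and hence $\pi_o(r)=1$. Otherwise the battery level remains finite, the normalized increment $(E_N^i-E_0^i)/N$ vanishes, and the time-averaged harvested energy must equal the time-averaged consumed energy, giving $\mathbb{E}[H_n^i\mid r]=P_d\,p_t^{\FTP}\,\pi_o(r)$; here I would use that the transmission coin is independent of the operability indicator with $\mathbb{E}[X_{n,t}^i]=p_t^{\FTP}$. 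Evaluating $\mathbb{E}[H_n^i\mid r]=\eta P_b r^{-\alpha}$ from $H_n^i=\eta P_b|h_{b,i}|^2 d_{b,i}^{-\alpha}$ and $\mathbb{E}[|h_{b,i}|^2]=1$, the balance equation yields $\pi_o(r)=\mu_1 r^{-\alpha}$. The boundary between the two regimes is $\mu_1 r^{-\alpha}=1$, i.e. $r=\mu_1^{1/\alpha}$, so $\pi_o(r)=1$ for $r\le\mu_1^{1/\alpha}$ and $\pi_o(r)=\mu_1 r^{-\alpha}$ for $r>\mu_1^{1/\alpha}$.

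Finally I would decondition. When $\mu_1^{1/\alpha}>R$ every transmitter in the cell lies in the always-operable regime, so the integral returns $\pi_{o}^{\FTP}=1$, matching the second branch. When $\mu_1^{1/\alpha}\le R$ I would split the integral at $\mu_1^{1/\alpha}$: the inner piece is $\int_0^{\mu_1^{1/\alpha}}\frac{2r}{R^2}\,dr=\mu_1^{2/\alpha}/R^2$, and the outer piece $\frac{2\mu_1}{R^2}\int_{\mu_1^{1/\alpha}}^R r^{1-\alpha}\,dr$ follows from the antiderivative $r^{2-\alpha}/(2-\alpha)$; combining the two $\mu_1^{2/\alpha}/R^2$ contributions produces the stated first branch. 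I expect the main obstacle to be the justification of the energy-balance step, namely the dichotomy between the divergent-battery regime (where $\pi_o(r)=1$) and the balanced regime, together with the claim that vanishing long-run drift forces mean harvest to equal mean consumption. This rests on the stationarity and ergodicity of the harvesting process and on the infinite-battery assumption, as in the energy-queue analysis of the cited model; the remaining integration is routine.
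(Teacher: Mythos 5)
Your proposal is correct and follows essentially the same route as the paper's own proof: condition on the BS--transmitter distance, use the two-regime energy-balance argument (battery divergence giving $\pi_o(r)=1$ when $\eta P_b r^{-\alpha} > P_d p_t^{\FTP}$, and harvest-equals-drain giving $\pi_o(r)=\mu_1 r^{-\alpha}$ otherwise, i.e., $\pi_o(r)=\min\left(1,\mu_1 r^{-\alpha}\right)$), then average over the density $\frac{2r}{R^2}$ and split the integral at $r=\mu_1^{1/\alpha}$. Your telescoping-drift justification of the balance step is in fact more explicit than the paper's terse appeal to the weak law of large numbers, and your evaluation of the integral correctly reproduces both branches of \eqref{energy_ftp}.
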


\begin{proof}
See Appendix~\ref{proof:piFTP}.
\end{proof}

As \eqref{energy_ftp} indicates, the probability of being an operable user has a reverse relation with $p_t^{\FTP}$ in the case $\left(\frac{\eta P_b}{P_d p_t^{\FTP}}\right)^{\frac{1}{\alpha}}\leq R$. However, for {the case} $\left(\frac{\eta P_b}{P_d p_t^{\FTP}}\right)^{\frac{1}{\alpha}}> R$, we have $ p_t^{\FTP}=1$. By deriving $\pi_{o}^{\FTP}$, the density of users which are operable and will transmit is $\lambda_t^{\FTP}=\lambda_d \pi_{o}^{\FTP}  p_t^{\FTP}$.

Now, using $\lambda_t^{\FTP}$, we can derive and evaluate the performance of FTP scheme.

\subsubsection{BS Outage Probability}

 The interference at BS is only caused by the transmitting operable D2D users with density $\lambda_t^{\FTP}$. The outage probability of BS is described as the probability
that the SINR at the BS is less than a predetermined threshold, i.e. $\gamma_b$. By invoking \eqref{sinr_bs_1}, the following proposition provides the BS outage probability within the FTP scheme.

\begin{prop}\label{pro: BS outage probability,FTP}
The outage probability of BS for FTP scheme can be approximated as

\begin{align}
\label{propositio: BS outage probability,FTP}
P_{out,b}^{\FTP}(\gamma_{b})\approx&1-\frac{\pi}{R K}\sum_{k=1}^{K}  a_k \sqrt{1-x_k^2} \exp\left(-\frac{\gamma_b  N_0}{P_c a_k^{-\alpha}}\right)
 \nonumber\\
 &\hspace{0em}\times\exp\!\left(\!-\pi\!\lambda_{t}^{\FTP} \Xi(\alpha) a_k^2\!P_d^\frac{2}{\alpha} P_c^\frac{-2}{\alpha}\gamma_{b}^{\frac{2}{\alpha}}\right),
\end{align}

where $\Xi(\alpha)=\Gamma\left(1-\frac{2}{\alpha}\right)\Gamma\left(1+\frac{2}{\alpha}\right)$, and $K$ 
is defined as a parameter to guarantee an accuracy-complexity
tradeoff, $x_k=cos(\frac{2k-1}{2K}\pi)$ and $a_k=\frac{R}{2}x_k+\frac{R}{2}$.
\end{prop}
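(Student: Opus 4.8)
The plan is to write the BS outage probability as the complement of the coverage probability, $P_{out,b}^{\FTP}(\gamma_b)=1-\Pr(\Gamma_b\ge\gamma_b)$, and evaluate the latter. First I would condition on the aggregate D2D interference $I_b=\sum_{j\in\phi_t}P_d|h_{j,b}|^2 d_{j,b}^{-\alpha}$ and on the cellular-user-to-BS distance $d_{c,b}=a$. Rearranging \eqref{sinr_bs_1}, the event $\{\Gamma_b\ge\gamma_b\}$ is equivalent to $\{|h_{c,b}|^2\ge \gamma_b a^{\alpha}(I_b+N_0)/P_c\}$. Since the desired channel gain $|h_{c,b}|^2$ is exponentially distributed with unit mean, the conditional coverage probability factorizes as $\exp(-\gamma_b N_0 a^{\alpha}/P_c)\cdot\exp(-\gamma_b a^{\alpha}I_b/P_c)$, i.e.\ a deterministic noise term times a term that is still random through $I_b$.

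Next I would remove the interference randomness by taking the expectation of $\exp(-\xi I_b)$ with $\xi=\gamma_b a^{\alpha}/P_c$, which is the Laplace transform of the interference generated by the operable, transmitting D2D users. Treating these interferers as a homogeneous PPP of density $\lambda_t^{\FTP}$ and applying the probability generating functional together with the exponential fading gives $\mathcal{L}_{I_b}(\xi)=\exp(-2\pi\lambda_t^{\FTP}\int_0^{\infty}(1-(1+\xi P_d r^{-\alpha})^{-1})\,r\,dr)$. The remaining radial integral is the standard one; with the substitution $u=r^{\alpha}$ and the Euler reflection identity $\Gamma(z)\Gamma(1-z)=\pi/\sin(\pi z)$ at $z=2/\alpha$, it collapses to $\pi(\xi P_d)^{2/\alpha}\Xi(\alpha)$ with $\Xi(\alpha)=\Gamma(1-\tfrac{2}{\alpha})\Gamma(1+\tfrac{2}{\alpha})$. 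Substituting $\xi$ back reproduces exactly the second exponential $\exp(-\pi\lambda_t^{\FTP}\Xi(\alpha)a^{2}P_d^{2/\alpha}P_c^{-2/\alpha}\gamma_b^{2/\alpha})$ in the statement.

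The final step is to decondition on the distance. Because the cellular user is uniform in the disk of radius $R$, the PDF of $a=d_{c,b}$ is $f(a)=2a/R^{2}$ on $[0,R]$, so $\Pr(\Gamma_b\ge\gamma_b)=\int_0^{R}\tfrac{2a}{R^{2}}F(a)\,da$, where $F(a)$ is the product of the two exponentials above. This integral has no elementary closed form, so I would invoke Gauss--Chebyshev quadrature: using the affine map $a=\tfrac{R}{2}x+\tfrac{R}{2}$ sending $[-1,1]$ onto $[0,R]$ (hence $a_k=\tfrac{R}{2}x_k+\tfrac{R}{2}$), one rewrites the integrand against the weight $1/\sqrt{1-x^{2}}$ and applies $\int_{-1}^{1}g(x)/\sqrt{1-x^{2}}\,dx\approx\tfrac{\pi}{K}\sum_{k=1}^{K}g(x_k)$ at the nodes $x_k=\cos(\tfrac{2k-1}{2K}\pi)$, which produces the stated sum and the factor $\sqrt{1-x_k^{2}}$ that cancels the quadrature weight. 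Subtracting from one gives \eqref{propositio: BS outage probability,FTP}. I expect the only real subtlety to be the two sources of the ``$\approx$'': the Laplace transform is taken over the whole plane $\mathbb{R}^{2}$ rather than the finite cell (the usual infinite-interferer-field approximation), and the distance average is evaluated by finite-$K$ quadrature, with the latter error controlled by the accuracy--complexity parameter $K$.
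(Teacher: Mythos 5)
Your proposal is correct and follows essentially the same route as the paper's proof: complement of coverage, exponential distribution of the desired channel gain to factor out the noise and interference terms, PGFL/Laplace transform of the PPP interference yielding $\exp\left(-\pi\lambda_{t}^{\FTP}\Xi(\alpha)a^{2}P_d^{2/\alpha}P_c^{-2/\alpha}\gamma_b^{2/\alpha}\right)$, deconditioning over the uniform-in-disk distance with PDF $2a/R^{2}$, and Gauss--Chebyshev quadrature. The only difference is cosmetic---you average the fading first and evaluate the radial integral via the reflection identity, whereas the paper keeps the fading expectation outside, integrates by parts with the substitutions $r^{\alpha}=z$ and $z=1/v$ to get $\Gamma\left(1-\frac{2}{\alpha}\right)$, and then recovers $\Gamma\left(1+\frac{2}{\alpha}\right)$ as the fading moment---and your explicit identification of the two approximation sources (infinite interferer field and finite $K$) is a point the paper leaves implicit.
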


\begin{proof}
See Appendix~\ref{proof:BS:FTP}.
\end{proof}

The outage expression in \eqref{propositio: BS outage probability,FTP} highlights the effect of $\pi_{o}^{\FTP}$ and $p_t^{\FTP}$ on the $P_{out,b}^{\FTP}(\gamma_{b})$. If $\pi_{o}^{\FTP}=1$, then $\lambda_t^{\FTP}=\lambda_d p_t^{\FTP}$. In this case, increasing $p_t^{\FTP}$, causes increase of $P_{out,b}^{\FTP}(\gamma_{b})$. The reason is that all the D2D transmitters will be operable and then the number of transmitters will be only dependent on $\lambda_d$ and $p_t^{\FTP}$; hence, the increase of these parameters results in a large number of transmitting users. If $\pi_{o}^{\FTP}\neq1$, then $\lambda_t^{\FTP}=\lambda_d p_t^{\FTP}\pi_{o}^{\FTP}$. In this case, $\alpha$, $\mu_1$ and $R$ specify the trends of $\lambda_t^{\FTP}$ which in turn specifies the trends of $P_{out,b}^{\FTP}(\gamma_{b})$.

\subsubsection{D2D Outage Probability}\label{prop: D2D outage probability, FTP}

According to \eqref{received signal at D2D}, the inflicted interference for a given D2D receiver is caused by the cellular user and the other transmitting operable D2D transmitters. The D2D outage probability is expressed as the probability in which the SINR at a D2D receiver is less than a predetermined threshold $\gamma_d$. The following proposition provides the exact D2D outage probability for the FTP scheme.

\begin{prop} ~\label{Prop:D2D:FTP}
With the FTP scheme, the D2D outage probability can be approximated as
\begin{align*}
            P_{out,d}^{\FTP}(\gamma_{d})&\!\approx\! 1\!-\!\exp\!\left(\!-\frac{\gamma_{d}N_{0}}{P_d{r_d^{-\alpha}}}\right)   \exp\left(-\pi \lambda_{t}^{\FTP} r_d^2 \gamma_{d}^{\frac{2}{\alpha}} \Xi(\alpha)\!\right)  \\
            & \times \frac{R\pi}{K}\sum_{k=1}^{K}\frac{\sqrt{1-x_k^2}}{1+\frac{\gamma_{d}P_cb_k^{-\alpha}}{{P_d}r_d^{-\alpha}}}f_{d_{c,i}}\left(b_k\right),
    \label{d2d_ftp_cheby}
    \numberthis
    \end{align*}
where $b_k=Rx_k+R$, and $f_{d_{c,i}}\left(r\right)$ given by
\end{prop}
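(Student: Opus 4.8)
The plan is to write $P_{out,d}^{\FTP}(\gamma_d)=1-\Pr(\Gamma_i\ge\gamma_d)$ and evaluate the success probability from the SINR in \eqref{eq:eq4}. Conditioning on the aggregate D2D interference $I_d=\sum_{j\in\phi_t,j\neq i}P_d|h_{j,i}|^2 d_{j,i}^{-\alpha}$, the cellular interference $I_c=P_c|h_{c,i}|^2 d_{c,i}^{-\alpha}$, and the noise, the success event becomes $P_d|h_{i,i}|^2 r_d^{-\alpha}\ge\gamma_d(I_d+I_c+N_0)$, and since $|h_{i,i}|^2$ is unit-mean exponential this is an exponential tail. Hence
\[
\Pr(\Gamma_i\ge\gamma_d)=\mathbb{E}\!\left[\exp\!\left(-\frac{\gamma_d(I_d+I_c+N_0)}{P_d r_d^{-\alpha}}\right)\right].
\]
Because the D2D interferers, the cellular user's channel and position, and the noise are mutually independent, this factors into the deterministic noise term $\exp(-\gamma_d N_0/(P_d r_d^{-\alpha}))$, the Laplace transform of $I_d$, and the Laplace transform of $I_c$, all evaluated at $s=\gamma_d/(P_d r_d^{-\alpha})$.

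For the D2D term, Slivnyak's theorem keeps the interferers a PPP of density $\lambda_t^{\FTP}$ as seen from the typical receiver, so the probability generating functional gives
\[
\mathcal{L}_{I_d}(s)=\exp\!\left(-2\pi\lambda_t^{\FTP}\!\int_0^\infty\!\left(1-\mathbb{E}_h\big[e^{-sP_d|h|^2 r^{-\alpha}}\big]\right)r\,dr\right).
\]
The inner Rayleigh average is $1/(1+sP_d r^{-\alpha})$, and the remaining radial integral is the standard one producing $\pi(sP_d)^{2/\alpha}\Xi(\alpha)$ with $\Xi(\alpha)=\Gamma(1-2/\alpha)\Gamma(1+2/\alpha)$ (via the reflection formula). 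Substituting $s=\gamma_d r_d^{\alpha}/P_d$, so that $sP_d=\gamma_d r_d^{\alpha}$, collapses this to $\exp(-\pi\lambda_t^{\FTP} r_d^2\gamma_d^{2/\alpha}\Xi(\alpha))$, which is the second exponential in the claim.

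For the cellular term I would condition on the distance $d_{c,i}=b$; the exponential channel again gives $\mathbb{E}_h[e^{-sI_c}\mid d_{c,i}=b]=1/(1+sP_c b^{-\alpha})$, and with $s=\gamma_d r_d^{\alpha}/P_d$ this is exactly the rational factor $1/(1+\gamma_d P_c b^{-\alpha}/(P_d r_d^{-\alpha}))$ appearing in the sum. It remains to average over $d_{c,i}$ against its density $f_{d_{c,i}}$, obtained from the uniform placement of the cellular user in the disk and the uniform random direction of the receiver. Since this distance integral is intractable in closed form, I would approximate it by Gauss--Chebyshev quadrature: with the change of variable $b=Rx+R$ sending $x\in[-1,1]$ to $b\in[0,2R]$ and $db=R\,dx$, and sampling at the nodes $x_k=\cos\frac{(2k-1)\pi}{2K}$, $b_k=Rx_k+R$, the rule $\int_{-1}^1 g(x)\,dx\approx\frac{\pi}{K}\sum_k g(x_k)\sqrt{1-x_k^2}$ yields the finite sum in the statement. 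Multiplying the three factors gives the claimed expression.

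The main obstacle is the cellular-interference term. The distance density $f_{d_{c,i}}$ is a genuinely geometric object (cellular user uniform in the cell of radius $R$, desired receiver offset from its transmitter by $r_d$ in a uniform random direction), and the integral against it has no closed form; this is the source of the ``$\approx$'' and the reason the quadrature is introduced. I would need to pin down this density explicitly and verify that its support justifies the $[0,2R]$ change of variable, after which the factorisation and the two Laplace transforms are routine.
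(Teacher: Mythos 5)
Your proposal is correct and follows essentially the same route as the paper: factor the success probability $\Pr(\Gamma_i\geq\gamma_d)$ into the noise term, the PGFL-based Laplace transform of the D2D interference (yielding $\exp(-\pi\lambda_t^{\FTP}r_d^2\gamma_d^{2/\alpha}\Xi(\alpha))$, exactly the paper's $O_1^{\FTP}$ computed via the steps of Proposition~\ref{pro: BS outage probability,FTP}), and the cellular interference term averaged over $f_{d_{c,i}}$ and then approximated by Gauss--Chebyshev quadrature (the paper's $O_2^{\FTP}$). The only detail you leave open, the explicit form of $f_{d_{c,i}}$, is handled in the paper by citing the known distance distribution between two random points in a disk, given in \eqref{distance distribution}.
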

\begin{align*}
        &f_{d_{c,i}}\left(r\right)=\left(\frac{2r}{R^2}\right)\left(\frac{2}{\pi}\arccos\left(\frac{r}{2R}\right)-\frac{r}{\pi R}\sqrt{1-\frac{r^2}{4R^2}}\right) ,\\&  
        \label{distance distribution}
\numberthis
\end{align*}

for $0 \leq r \leq 2R$, denotes the PDF of the distance between two randomly distributed nodes in a disk with radius $R$ \cite{moltchanov2012distance}. 
 
\begin{proof}
See Appendix~\ref{proof:D2D:FTP}.

\label{proof: D2D outage probability, FTP}
\end{proof}

The proposition \ref{Prop:D2D:FTP} explicitly reveals that the D2D outage probability determined by $p_t^{\FTP}$ and $\pi_{o}^{\FTP}$ in which these two parameters specify the $\lambda_t^{\FTP}$. By increasing $p_t^{\FTP}\pi_{o}^{\FTP}$, $\lambda_t^{\FTP}$ increases and accordingly the outage probability increases.

\subsubsection{Average achievable D2D Sum-Rate}

The average achievable D2D sum-rate with the FTP scheme is derived in the Proposition \ref{sumra_ftp} by considering the fact that the D2D users only transmit in UL and the average number of transmitting operable transmitters equals to $N_{t}^{\FTP}={\lambda_{t}^{\FTP}\pi R^2}$.

\begin{prop}~\label{Prop:Rate:FTP}
\label{sumra_ftp}
    The average achievable D2D sum-rate of the FTP scheme can be expressed as

    \begin{align*}
            R_{s}^{\FTP} &\!= \!\frac{{\lambda_{t}^{\FTP}\pi R^2}}{2 \hspace{1mm}{\text{Ln}2}}\int_{0}^{\infty}\!\!\!\frac{1-\!P_{out,d}^{\FTP}\left(x\right)}{1+x} dx. 
    \label{sumrate_ftp}
    \numberthis
    \end{align*}
    \label{D2D sum rate, FTP}

\end{prop}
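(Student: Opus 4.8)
The plan is to write the sum-rate as the mean number of active D2D links times the per-link ergodic rate, and then reduce the per-link rate to an integral of the SINR complementary CDF that is exactly the quantity computed in Proposition~\ref{Prop:D2D:FTP}. First I would express the average achievable D2D sum-rate as
\[
R_{s}^{\FTP} = \frac{1}{2}\,\mathbb{E}\!\left[\sum_{i\in\phi_t}\log_2\!\left(1+\Gamma_i\right)\right],
\]
where $\Gamma_i$ is the D2D SINR in \eqref{eq:eq4} and the factor $\tfrac12$ accounts for the fact that the D2D links are active only during the UL sub-slot, which occupies half of each normalized time slot.

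Next I would invoke the stationarity of the homogeneous PPP of transmitting users together with Slivnyak's theorem: every active link is statistically identical to the typical link observed from a randomly chosen receiver, whose SINR law is precisely the one underlying $P_{out,d}^{\FTP}$ in Proposition~\ref{Prop:D2D:FTP}. Hence the expectation of the random sum factorizes into the mean number of active links times the typical-link mean rate,
\[
R_{s}^{\FTP} = \frac{N_t^{\FTP}}{2}\,\mathbb{E}\!\left[\log_2\!\left(1+\Gamma_i\right)\right], \qquad N_t^{\FTP}=\lambda_t^{\FTP}\pi R^2.
\]

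The core step is the standard identity for the expectation of a logarithm of a nonnegative random variable. Writing $\log_2(1+\Gamma_i)=\tfrac{1}{\ln 2}\ln(1+\Gamma_i)$ and using $\mathbb{E}[Y]=\int_0^\infty \Pr(Y>t)\,dt$ for $Y=\ln(1+\Gamma_i)\geq 0$, the change of variable $x=e^{t}-1$ (so that $dt=dx/(1+x)$) yields
\[
\mathbb{E}\!\left[\log_2\!\left(1+\Gamma_i\right)\right] = \frac{1}{\ln 2}\int_0^\infty \frac{\Pr(\Gamma_i>x)}{1+x}\,dx .
\]
By the definition of D2D outage, $\Pr(\Gamma_i>x)=1-P_{out,d}^{\FTP}(x)$, and substituting this together with $N_t^{\FTP}=\lambda_t^{\FTP}\pi R^2$ gives exactly \eqref{sumrate_ftp}.

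I do not anticipate a genuine analytical obstacle, since the statement is essentially a repackaging of the ergodic-rate identity combined with the outage expression already derived. The two points deserving care are purely structural: the bookkeeping factor $\tfrac12$ arising from the UL/DL time sharing, and the justification that the mean of the random sum of per-link rates equals the mean number of links times the typical-link rate. The latter rests on the PPP of transmitters being stationary, so that the Palm (typical-receiver) SINR distribution captured by $P_{out,d}^{\FTP}$ applies uniformly to every active link and the per-link rate is independent of the link index.
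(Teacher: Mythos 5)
Your proposal is correct and takes essentially the same approach as the paper: both decompose the sum-rate into the mean number of transmitting links, $\lambda_t^{\FTP}\pi R^2$, times the typical-link ergodic rate with the $\tfrac{1}{2}$ uplink factor, and then reduce $\mathbb{E}\left[\log_2(1+\Gamma_i)\right]$ to the integral of the SINR complementary CDF against $1/(1+x)$, into which the outage expression of Proposition~\ref{Prop:D2D:FTP} is substituted. The only cosmetic difference is how that last identity is established: you use the layer-cake formula $\mathbb{E}[Y]=\int_0^\infty \Pr(Y>t)\,dt$ with the substitution $x=e^t-1$, whereas the paper writes $\int_0^\infty \log_2(1+x)\, f_{\Gamma_d}(x)\,dx$ and integrates by parts --- the two computations are equivalent.
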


\begin{proof}
See Appendix~\ref{APX:pr:D3D rate}.
\end{proof}

 It is seen that the trends of  $\lambda_t^{\FTP}$ and $P_{out,d}^{\FTP}$ which are dependent on $p_t^{\FTP}$ and $\pi_{o}^{\FTP}$, specify the $R_s^{\FTP}$. 

\subsection{Scheme 2: Adaptive Transmission Probability~(ATP)}

In this scheme, an operable D2D transmitter senses the channel before accessing it, based on which it can decide to transmit or remain idle if an ongoing transmission is detected. Note that a non-operable D2D transmitter just stays idle. An elementary sensing period has a duration of $T_s$. Each D2D transmitter selects a random number between 0 and $1$ which is denoted by $t_s^{i}$ for the $i^{th}$ D2D transmitter i.e. $t_s^{i} \sim U[0, 1]$ as in~\cite{nguyen2012stochastic}. Based on the selected number and the sensing duration $T_s$, the maximal sensing period is set to $t_s^i T_s$. As the sensing period is considered very short, we can adopt the approximation that the cellular user also starts to transmit after $T_s$, which simplifies the expressions. A subset of operable transmitters that do not detect any nearby transmission start to access to the channel, while the others remain idle. 

In this scheme, the available energy for the $i^{th}$ D2D transmitter can be expressed as

  \begin{align*}
     &E_n^{i}=\!E_{n-1}^{i}+H_{n-1}^{i}\!-{P_s}{t_s^{i}T_s} X^i_{n-1,o}\\&\hspace{1em}-{P_d}({1-t_s^{i}T_s}){X^i}_{n-1,o}X_{n-1,t}^i,
     \numberthis
 \end{align*}

where $P_s$ denotes the consumed power during sensing period. We assume that the consumed energy for sensing is negligible compared to $E_{th}$, such that it cannot affect the state change from operable to non-operable. Hence, if $E_n^{i} \geq E_{th}$ for the $i^{th}$ user, then $E_n^{i}-P_s {t_s^i}T \geq E_{th}$. $X_{n-1,o}^i$ is the state of being operable at the beginning of the UL sub-slot:

 \begin{align*}
      X_{n-1,o}^i=\begin{cases} 1 & {\text{if}} \hspace{4mm} E_{n-1}^i\geq E_{th} \\0 &  {\text{if}} \hspace{4mm} E_{n-1}^i< E_{th}, \end{cases}
      \label{Y}
     \numberthis
 \end{align*}

 and $X_{n-1,t}^i$ is the channel access capability:

  \begin{align*}
      X_{n-1,t}^i=\begin{cases} 1 & \text{{w}ith \hspace{1mm} probability} \hspace{3 mm} p_t^{\ATP} \\0 &  \text{with \hspace{1mm} probability} \hspace{3 mm} 1-p_t^{\ATP}.\end{cases}
     \numberthis
     \label{Z}
 \end{align*}

We denote the probability of being an operable D2D user with $\pi_{o}^{\ATP}=\lim_{N\rightarrow \infty} \frac{1}{N}\sum^{N}_{n=0}\mathbb{E}\left[X^i_{n,o}\right]$. By this definition, and following the same approaches as in Proposition~\ref{prop:pftp}, $\pi_{o}^{\ATP}$ can be obtained as

\begin{align*}
\pi_{o}^{\ATP}&=\!\mathbb{E}_{d_{b,i}}\!\left[\min\left(1,\frac{\mathbb{E}_{h_{b,i}}\left[{\eta P_{b}
|h_{b,i}|^{2}} (d_{b,i})^{-\alpha}\right]}{P_s \mathbb{E}[t_s^iT_s]+P_d\mathbb{E}[1\!-t_s^iT_s]p_t^{\ATP}}\right)\right]\\
&=\int_0^{R}\min\left(\!1,\!\frac{{\eta P_b} r^{-\alpha}}{P_s \frac{T_s}{2}+P_d (1-\frac{T_s}{2})p_t^{\ATP}}\!\right)\frac{2r}{R^2}dr.
\label{transmiss ATP}
\numberthis
\end{align*}

We will have the following cases for $i^{th}$ and $j^{th}$ operable D2D transmitters:
\begin{itemize}
    \item The $i^{th}$ and $j^{th}$ operable D2D transmitters will transmit if none of them detects another transmission.
    \item The $i^{th}$ operable D2D transmitter will transmit if $t_s^i<t_s^j$ or $t_s^i\geq t_s^j$ but the $i^{th}$ operable D2D transmitter does not detect another transmitted signal. In fact, this means that the $j^{th}$ operable D2D transmitter is not in the protection region of the $i^{th}$ operable D2D transmitter. More specifically, the protection region is defined as a circular region around each operable D2D transmitter with radius $\left({\frac{P_d|h_{i,j}|^2}{\beta_{th}}}\right)^{\frac{1}{\alpha}}$, where $|h_{i,j}|^2$ is the channel gain between $i^{th}$ and $j^{th}$ transmitters. The average protection region around an operable D2D transmitter can be obtained as

\begin{align*}
        r_p&= \mathbb{E}_{h_{i,j}}\left[\left({\frac{P_d|h_{i,j}|^2}{\beta_{th}}}\right)^{\frac{1}{\alpha}}\right]=\left(\frac{P_d}{\beta_{th}}\right)^{\frac{1}{\alpha}} \Gamma\left(1+\frac{1}{\alpha}\right),
\label{rp}
\numberthis
\end{align*}

{where the received power from the nearby interfering D2D transmitter should be less than the protection threshold $\beta_{th}$.}
The fact is that the D2D receivers should be protected from the interference of the other transmitters. By assuming 
$r_p >> r_d$, a D2D receiver is also protected by the sensing done by its transmitter \cite{sakr2015cognitive}. 
\end{itemize}

{The transmission probability of ATP scheme, $p_{t}^{\ATP}$, by using \cite[Proposition 13]{nguyen2012stochastic} and \cite{7073589}, can be written as 

   \begin{equation}
    \begin{aligned}
           & p_{t}^{\ATP}=\frac{1-\exp(-W \pi_{o}^{\ATP})}{W \pi_{o}^{\ATP}},
   \label{eq: ATPtransmission probability2}
   \end{aligned}
   \end{equation}

where $ W=\frac{2\pi \Gamma\left(\frac{2}{\alpha}\right)\lambda_d}{\alpha \left (\frac{\beta_{th}}{P_d}\right)^{\frac{2}{\alpha}}}$.}

Then, by invoking \eqref{transmiss ATP} and \eqref{eq: ATPtransmission probability2}, $\pi_{o}^{\ATP}$ can be derived as a function of the system parameters

\begin{align*}
&\pi_{o}^{\ATP}\!\!=\!\int_0^R\!\!\!\!\min\Bigg(1,\frac{\eta P_b r^{-\alpha}}{{P_s \frac{T_s}{2}}\!+\!{P_d (1-{\frac{T_s}{2}})\frac{1-\exp(-W\pi_{o}^{\ATP})}{W\pi_{o}^{\ATP}}}}\Bigg)\frac{2r}{R^2}dr.
\label{equation of ATP}
\numberthis
\end{align*}

Now, we turn our attention to derive the outage probabilities and the average achievable D2D sum-rate for the ATP scheme.

\subsubsection{BS Outage Probability}
The BS is in outage if the received SINR at the BS is less than $\gamma_{b}$. For ATP scheme, we have the following result.

\begin{prop}\label{prop:out:atp}
BS outage probability within ATP scheme can be obtained as

\begin{align}\label{eq:BS,ATP,cheby}
            &P_{out,b}^{\ATP}(\gamma_{b})\approx 1-\frac{\pi}{K R}\sum_{k=1}^{K} a_k \sqrt{1-{x_k}^2} \exp\left(\!-\frac{\gamma_{b}N_{0}}{P_c{a_k^{-\alpha}}}\right) \nonumber\\
            &\hspace{2em} \times \exp\!\left(\!-\pi\!\lambda_{t}^{\ATP} \!\gamma_{b}^{\frac{2}{\alpha}} a_k^2\!P_d^\frac{2}{\alpha} P_c^\frac{-2}{\alpha}
            \Xi(\alpha)\right).
    \end{align}

\end{prop}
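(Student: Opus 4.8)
The plan is to mirror the derivation for the FTP scheme (Proposition~\ref{pro: BS outage probability,FTP}), because the BS SINR in~\eqref{sinr_bs_1} has exactly the same form under both schemes; the only change is that the active interferers now have density $\lambda_t^{\ATP}=\lambda_d\,\pi_{o}^{\ATP}\,p_t^{\ATP}$ instead of $\lambda_t^{\FTP}$. I would begin from $P_{out,b}^{\ATP}(\gamma_b)=\Pr(\Gamma_b<\gamma_b)$ and, using~\eqref{sinr_bs_1}, rewrite the outage event as $|h_{c,b}|^2<\frac{\gamma_b d_{c,b}^{\alpha}}{P_c}\left(I+N_0\right)$, where $I=\sum_{j\in\phi_t}P_d|h_{j,b}|^2 d_{j,b}^{-\alpha}$ is the aggregate D2D interference at the BS.

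The second step exploits Rayleigh fading on the desired link: since $|h_{c,b}|^2$ is unit-mean exponential, conditioning on $d_{c,b}$ and on $I$ gives the success probability $\exp\!\left(-\frac{\gamma_b d_{c,b}^{\alpha}}{P_c}(I+N_0)\right)$. The noise separates out as the deterministic factor $\exp\!\left(-\frac{\gamma_b N_0 d_{c,b}^{\alpha}}{P_c}\right)$, while the interference contribution becomes the Laplace transform $\mathcal{L}_I(s)$ evaluated at $s=\gamma_b d_{c,b}^{\alpha}/P_c$.

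The core calculation is $\mathcal{L}_I(s)$, for which I would use the probability generating functional of a PPP together with the exponential fading, reducing it to $\exp\!\left(-2\pi\lambda_t^{\ATP}\int_0^{\infty}\frac{sP_d r^{-\alpha}}{1+sP_d r^{-\alpha}}\,r\,dr\right)$. Evaluating the integral (a standard Beta-function identity) yields $\exp\!\left(-\pi\lambda_t^{\ATP}(sP_d)^{2/\alpha}\Gamma(1-\tfrac{2}{\alpha})\Gamma(1+\tfrac{2}{\alpha})\right)$; substituting $s$ reproduces the $\Xi(\alpha)$ exponential in~\eqref{eq:BS,ATP,cheby} with the factor $a_k^2 P_d^{2/\alpha}P_c^{-2/\alpha}\gamma_b^{2/\alpha}$ once the cellular-user distance is set to a quadrature node. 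Finally, I would de-condition on $d_{c,b}$, whose PDF over the cell of radius $R$ is $2r/R^2$, and approximate $\int_0^R(\cdot)\,\frac{2r}{R^2}\,dr$ by Gauss--Chebyshev quadrature under the change of variable $r=\frac{R}{2}(x+1)$, which produces the finite sum over $a_k=\frac{R}{2}x_k+\frac{R}{2}$ with weights $\sqrt{1-x_k^2}$ and the prefactor $\frac{\pi}{KR}$, exactly as in~\eqref{eq:BS,ATP,cheby}.

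The main obstacle is conceptual, not computational: under ATP the simultaneously transmitting nodes are \emph{not} a genuine PPP, because carrier sensing and the protection-region rule impose a hard-core-type repulsion and hence spatial correlations among active interferers. The generating-functional step is legitimate only after approximating this sensing-thinned pattern by an independent PPP of effective density $\lambda_t^{\ATP}$, with the sensing folded entirely into $p_t^{\ATP}$ from~\eqref{eq: ATPtransmission probability2}. This approximation is what justifies the ``$\approx$'' in~\eqref{eq:BS,ATP,cheby}; once it is granted, the remaining manipulations are mechanically identical to those in Appendix~\ref{proof:BS:FTP}.
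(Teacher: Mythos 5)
Your proposal is correct and follows essentially the same route as the paper: the paper's own proof is a one-line remark that the derivation repeats Appendix~\ref{proof:BS:FTP} with $\lambda_t^{\FTP}$ replaced by $\lambda_t^{\ATP}$, and your steps (Rayleigh-fading exponential trick, noise/interference factorization, PGFL-based Laplace transform yielding $\Xi(\alpha)$, de-conditioning on $d_{c,b}$ via Gauss--Chebyshev quadrature) are exactly those steps. Your closing observation---that under ATP the sensing-induced repulsion makes the active transmitters a hard-core-type process, so the PGFL step requires approximating them by an independent PPP of density $\lambda_t^{\ATP}$---is a valid point the paper leaves implicit, and it correctly identifies the source of the ``$\approx$'' in \eqref{eq:BS,ATP,cheby}.
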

\begin{proof}
The proof follows similar steps as those in Proposition~\ref{pro: BS outage probability,FTP}, and it is thus omitted.
\end{proof}

Proposition~\ref{prop:out:atp}
implies that $P_{out,b}^{\ATP}(\gamma_{b})$ is the same as for FTP scheme, considering the same number of D2D transmissions which, as Proposition~\ref{prop:out:atp} denotes, $P_{out,b}^{\ATP}(\gamma_{b})$ is a function of $\lambda_t^{\ATP}$. From \eqref{equation of ATP}, it follows that $p_t^{\ATP}$ and $\pi_{o}^{\ATP}$ are related to each other due to sensing; this results in an  algorithm that adapts the density of transmitting operable users.

\subsubsection{D2D Outage Probability}

D2D outage probability for the ATP scheme can be written as

    \begin{align*}
            P_{out,d}^{\ATP}\left(\gamma_{d}\right)
            &=1-\exp\left(-\frac{\gamma_{d}N_{0}}{P_d{{r_d}^{-\alpha}}}\right)\\&\times\mathbbm{E}_{\phi_{t},h_{j,i}}
            \left[\prod_{j\in\phi_{t}} \exp\left(-\frac{\gamma_{d}|h_{j,i}|^2 d_{j,i}^{-\alpha}}{{r_{d}^{-\alpha}}}\right)\right]\\&\times\int_{0}^{2R}\left(\frac{1}{1+\frac{\gamma_{d}P_c{r}^{-\alpha}}{{P_d}{r_d}^{-\alpha}}}\right) \times f_{d_{c,i}}\left(r\right) dr,
            \numberthis
            \end{align*}

where

            \begin{align*}
            &\mathbbm{E}_{\phi_{t},h_{j,i}}
            \left[\prod_{j\in\phi_{t}} \exp\left(-\frac{\gamma_{d}|h_{j,i}|^2 d_{j,i}^{-\alpha}}{{r_{d}^{-\alpha}}}\right)\right] \nonumber\\
            &=\!\exp\!\left(\!-\!2\pi\!\lambda_{t}^{\ATP}\!\int_{r_p}^{\!\infty}\!\!\left(\!\!1-\!\mathbbm{E}_{h_{\!j,i\!}}\!\left[\!\exp\!\left(\frac{-\gamma_{d}|h_{j,i}|^2 r^{-\alpha}\!}{ {r_{d}^{-\alpha}}}\!\right)\right]\right)\!rdr\!\right)\\
            & =\exp\left(-\pi\!\lambda_{t}^{\ATP} \int_{{r_p}^2}^{\infty}\frac{dv}{1+\frac{{r_{d}^{-\alpha}} v^{\frac{\alpha}{2}}}{\gamma_{d} }}\right).
            \numberthis
            \label{proofBSatp}
    \end{align*}

By setting $r^2=v$, we can express $P_{out,d}^{\ATP}(\gamma_{d})$ as

   \begin{equation}
    \begin{aligned}
           P_{out,d}^{\ATP}(\gamma_{d})
           &=1-\exp\left(-\frac{\gamma_{d}N_{0}}{P_d{{r_d}^{-\alpha}}}\right)\\&\times\exp\left(-\pi \lambda_{t}^{\ATP} \int_{r_p^2}^{\infty} \frac{dv}{1+\frac{  r_d^{-\alpha} v^{\frac{\alpha}{2}}}{\gamma_{d} }} \right)\\ &\times \underbrace{\int_{0}^{2R}\left(\frac{1}{1+\frac{\gamma_{d}P_c{r}^{-\alpha}}{{P_d}{r_d}^{-\alpha}}}\right) \times f_{d_{c,i}}\left(r\right) dr}_{O_1^{\ATP}}.
   \label{poutd2dtATP}
   \end{aligned}
   \end{equation}
where by using Gaussian-Chebyshev
quadrature, we can express ${O_1^{\ATP}}$ as
\begin{align*}
 {O_1^{\ATP}}\approx\frac{R\pi}{K}\sum_{k=1}^{K}\frac{\sqrt{1-{x_k}^2}}{1+\frac{\gamma_{d}P_c{b_k}^{-\alpha}}{{P_d}{r_d}^{-\alpha}}}f_{d_{c,i}}\left(b_k\right).
 \numberthis
\end{align*}

The difference between the D2D outage probability of FTP and ATP schemes arises from  the interference from the different set of D2D transmitters due to the sensing property. However, the terms related to the noise and the inflicted interference from the cellular user are the same for FTP and ATP schemes.

\begin{figure}[th]
\centering
\includegraphics[width=80mm, height=60mm]{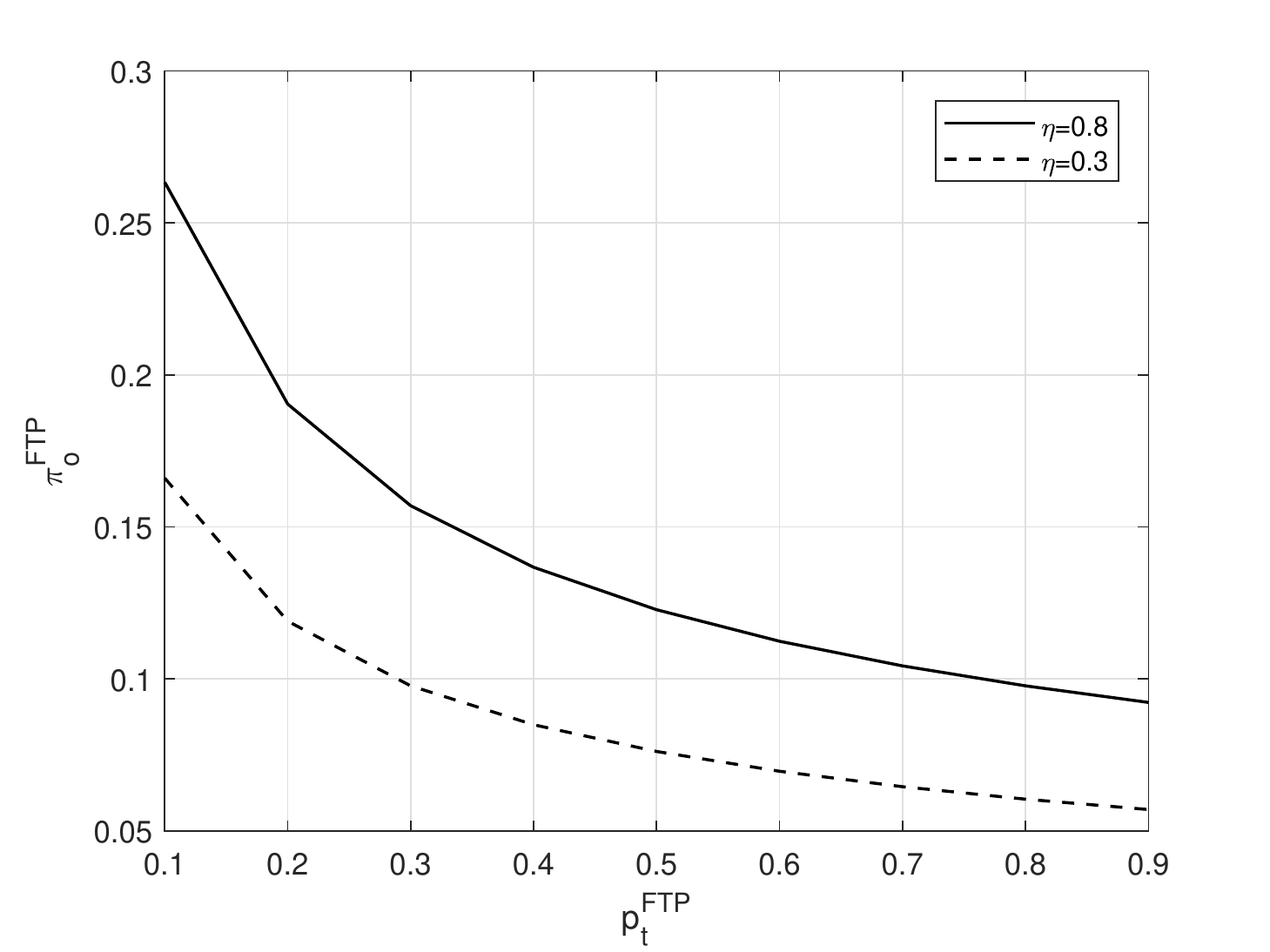}
\caption{The probability of being operable, $\pi_{o}^{\FTP}$, versus $p_t^{\FTP}$ for different $\eta$ i.e. 0.3 and 0.8 for FTP scheme.}
\label{pr_FTP_pon}
\end{figure}

\subsubsection{Average achievable D2D Sum-Rate}

In this scheme, the number of transmitters will be $N_{t}^{\ATP}=\lambda_{t}^{\ATP}\pi R^2$. Then, the average achievable D2D sum-rate with regards to that the D2D transmitter can only transmit on UL sub-slot and based on the derived D2D outage probability is given by

\begin{equation}
\begin{aligned}
         R_{s}^{\ATP}&=\mathbbm{E}\left[\sum_{i=1}^{N_{t}^{\ATP}}\frac{1-t_s^{i}T_s}{2}\log_{2}(1+\Gamma_{i})\right]
         \\&=\frac{1-{\frac{T_s}{2}}}{2} \!\frac{{\lambda_{t}^{\ATP}\!\pi\!R^2}}{\text{Ln}2}\!\int_{0}^{\infty}\!\frac{1-P_{out,d}^{\ATP}\left(x\right)}{1+x} \!dx,
\label{sumrate_atp}
\end{aligned}
\end{equation}

where in the last step, we use the fact that  $\mathbbm{E}[T_st_s^{i}]=\frac{T_s}{2}$ and $p_{out,d}^{\ATP}$ is invoked from \eqref{poutd2dtATP}.

\begin{rem}
The expressions in \eqref{sumrate_ftp} and \eqref{sumrate_atp} are not simple
enough to provide immediate insight, but they are general and
fast to evaluate using popular scientific software packages such as Matlab and Mathematica.
\end{rem}

\section{Numerical Results and Discussions}~\label{sec:num}

In this section, we evaluate the performance of FTP and ATP schemes for EH based D2D underlying cellular network. For both schemes, we present the results for D2D and cellular outage probabilities and also average achievable D2D sum-rate. Then, we discuss the effect of important parameters such as $\lambda_d$, $\eta$ $p_t^{\FTP}$, and $\beta_{th}$ on these performance metrics.

We set the transmit power of BS and cellular transmitters to $P_{b}=44$~dBm and $P_{c}=10$~dBm. The D2D transmission, D2D sensing, and noise power are set to $P_{d}=-10$~dBm, $P_s=-30$~dBm and $N_0=-90$~dBm, respectively. The D2D and cellular transmitters are scattered in a circular region with radius $R=100$~m. In this region, each D2D receiver is placed in the distance $r_d=5$~m from its transmitter. The unit variance Rayleigh fading is considered for all links. Moreover, the path loss exponent for the cellular and D2D is set to $\alpha=4$.

The effect of transmission probability on the probability of being an operable user is
illustrated in Fig. \ref{pr_FTP_pon}.
Increasing the transmission probability is followed by consuming more energy which diminishes the energy level of the battery and the probability of being an operable user which is depicted in this figure. We compare this probability for $\eta=0.3$ and $\eta=0.8$. As expected, higher $\eta$ provides a higher amount of harvested energy which increases the probability of being an operable user.
$p_t^{\FTP}$ and $\pi_{o}^{\FTP}$ are independent of the density of D2D transmitters i.e. $\lambda_d$, then these probabilities are equal for all $\lambda_d$.
It is worthy to mention that as Fig. \ref{pr_FTP_pon} shows, in the case of $\eta=0.3$ and $p_t^{\FTP}=0.9$, the probability of being operable goes to 0 which means none of the D2D transmitters can transmit. This indicates the importance of EH designing and transmission probability $p_t^{\FTP}$ parameters on the EH-based D2D networks.

Fig. \ref{trns_atp} depicts the transmission probability and the probability of being an operable user as a function of protection power threshold i.e. $\beta_{th}$ for the ATP scheme.
As $\beta_{th}$ increases, the transmission probability increases, while the probability of being an operable user decreases. The reason is that by increasing $\beta_{th}$, more users will have the opportunity to transmit and then the transmission probability of D2D transmitters increases. 
In this regard, similar to the FTP scheme, the probability of being operable decreases by increasing the transmission probability. 
Transmission probability trends are also examined by increasing the density of D2D transmitters, $\lambda_d$. It can be observed that by increasing $\lambda_d$, the transmission probability is reduced, while the probability of being  operable is increased. 
This is due to the fact that in the ATP scheme, the sensing capability enables an adaptive transmission probability based on the density of D2D transmitters.

\begin{figure}[t!]
\centering
\includegraphics[width=80mm, height=60mm]{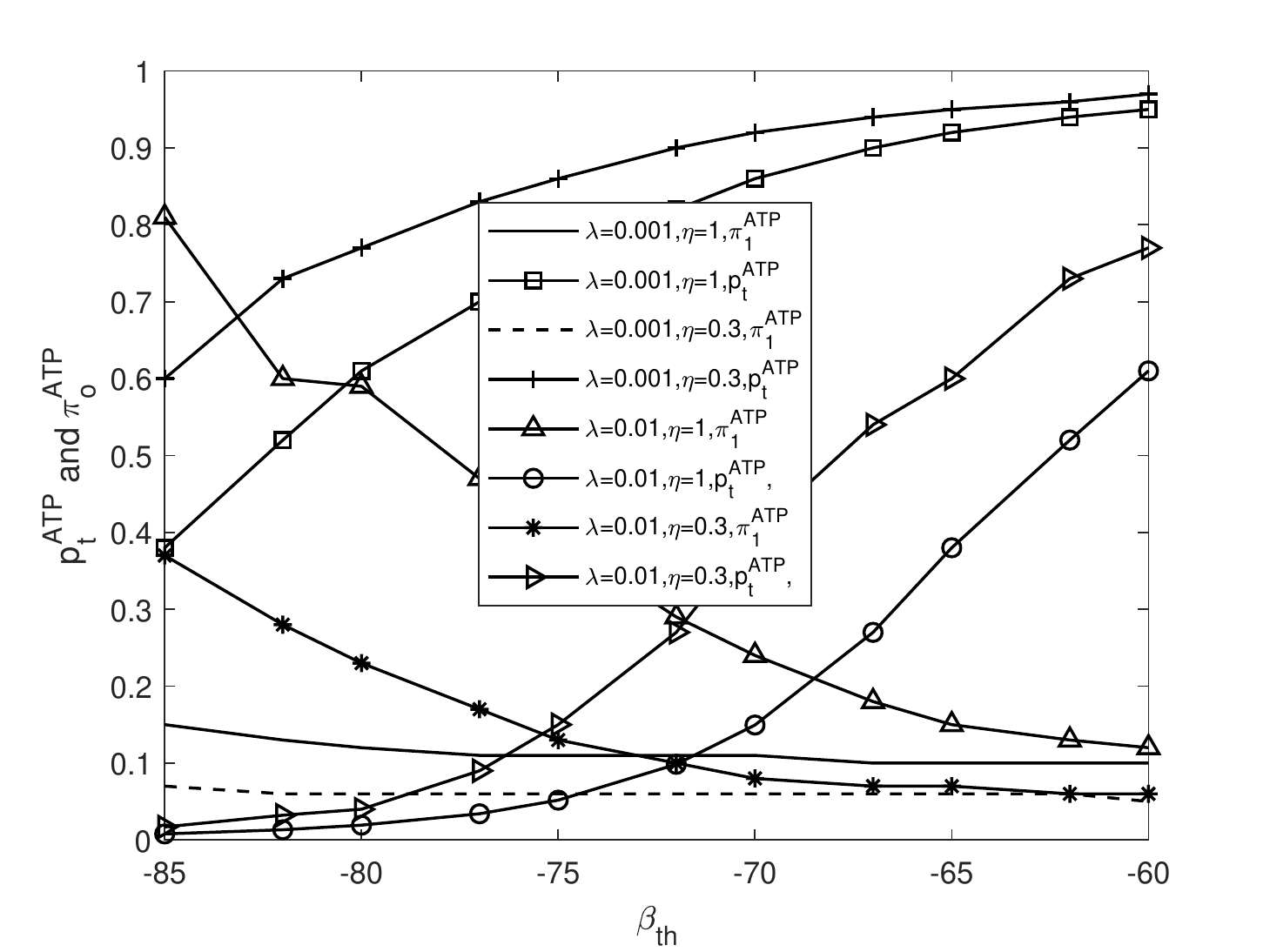}

\caption{The probability of being operable, $\pi_{o}^{\ATP}$, and the transmission probability, $p_t^{\ATP}$, versus $\beta_{th}$ and for $\lambda_d \in \{0.01,0.001\}$ and $\eta \in \{0.3,0.8\}$}
\label{trns_atp}
\end{figure}

\begin{figure}[t!]
\centering
\includegraphics[width=80mm, height=60mm]{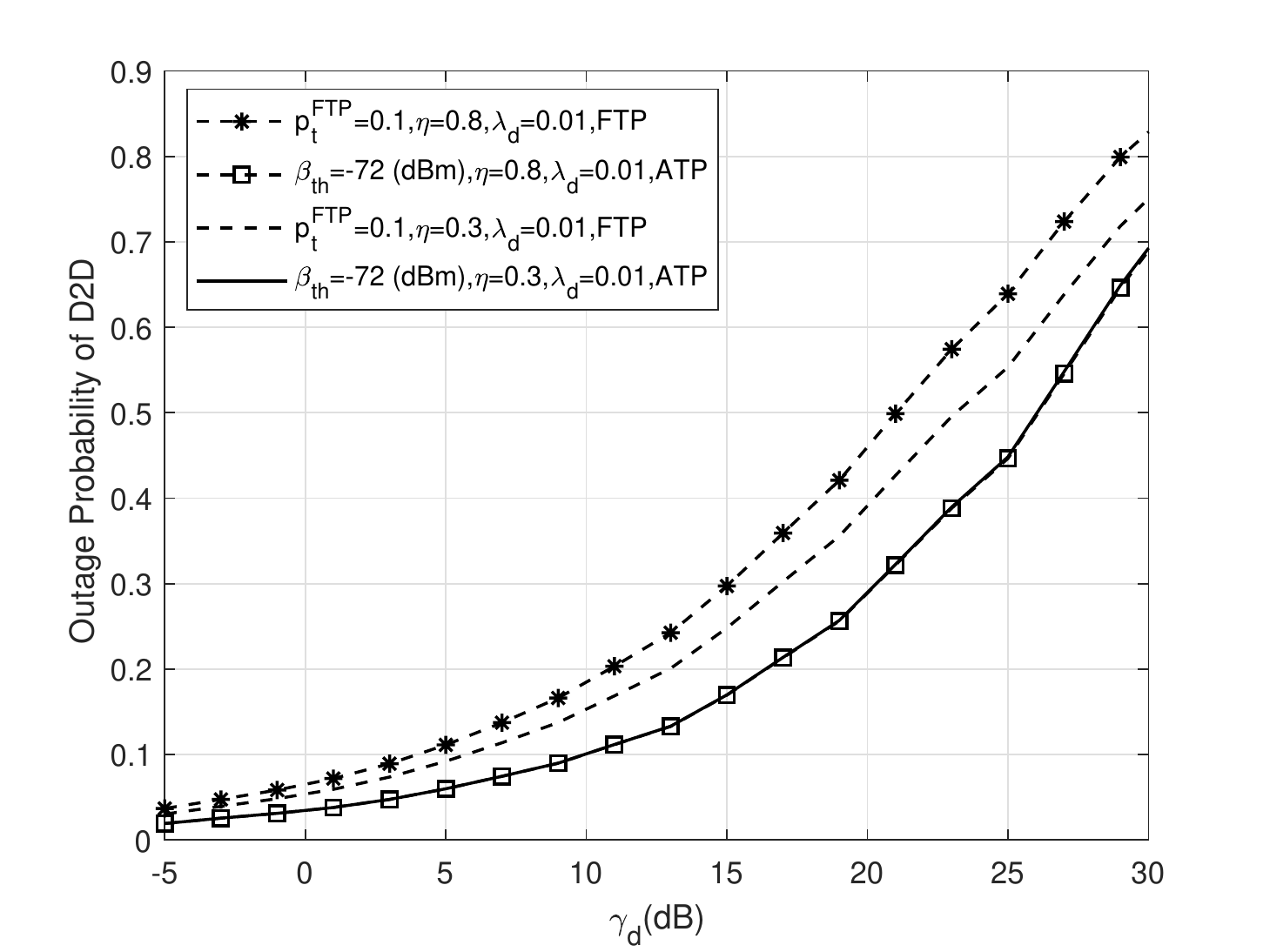}
\caption{D2D outage probability versus $\gamma_{d}$ for FTP and ATP schemes and for $\beta_{th}=-72$ dBm and $p_t^{\FTP}=0.1$. }
\label{d2d_ftp_atp_2020}
\end{figure}

\begin{figure}
\centering
\includegraphics[width=80mm, height=60mm]{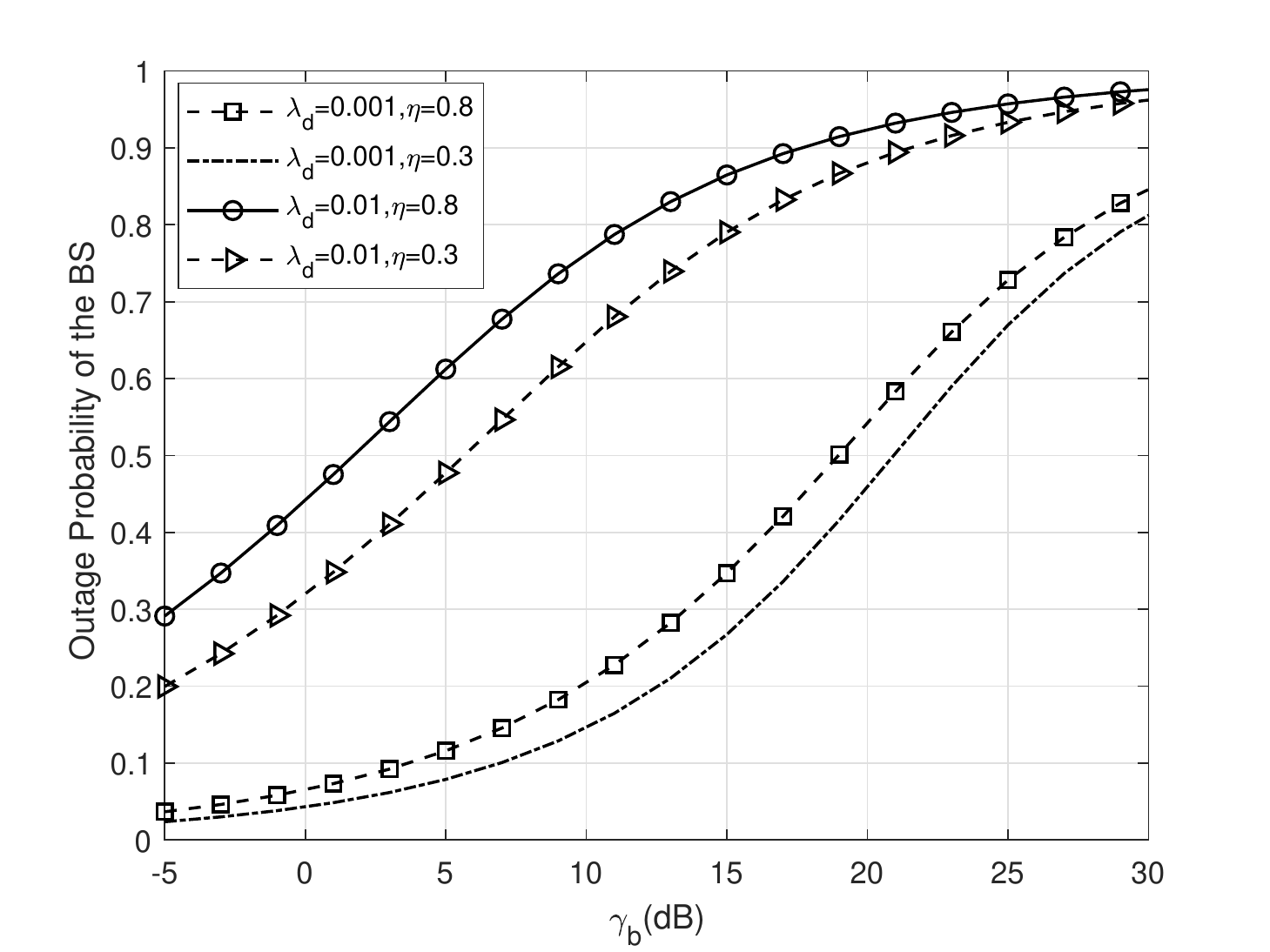}

\caption{BS outage probability versus $\gamma_{b}$ with $\lambda_t^{\FTP}=\lambda_t^{\ATP}$, for  $\lambda_d \in \{0.01,0.001\}$ and $\eta \in \{0.3,0.8\}$ and $p_t^{FTP}=0.1$.}
\label{bs_ftp_atp_2020}
\end{figure}

 Fig. \ref{d2d_ftp_atp_2020} shows the outage probability for D2D transmitters with FTP and ATP schemes as a function of $\gamma_d$. In order to allow a fair comparison, we set $\beta_{th}=$ -72~dBm and $p_t^{\FTP}=0.1$ for $\eta=0.8$ and $\lambda_d=0.01$ which provide the identical $\lambda_t^{\FTP}$ and $\lambda_t^{\ATP}$. It can be observed that ATP scheme achieves lower outage probability for D2D transmitters, especially for the higher $\gamma_d$. This result highlights the importance of channel sensing which leads to select the D2D transmitters located far enough to each other, which result in efficient interference management framework for the considered D2D underlaid cellular system. In this figure, we also include the results for $\eta=0.3$ with the same parameters. This provides the equal $\lambda_t^{\ATP}$ due to the adaptive property of the ATP scheme and also the same $\beta_{th}$ which translates to the same protection region for each D2D user. In this case, as expected, lower outage probability for FTP scheme is achieved compared to the case $\eta=0.8$.

We show the BS outage probability as a function of $\gamma_b$ in Fig. \ref{bs_ftp_atp_2020}. The BS outage probability of FTP and ATP schemes with the equal density of transmitting operable users i.e. $\lambda_t^{\FTP}=\lambda_t^{\ATP}$ are identical. Higher $\eta$ and $\lambda_d$ provide higher BS outage probability due to the higher number of transmitting operable users. It can be also seen that the BS outage probability is a monotone increasing function of $\gamma_b$. This figure also reveals that, for higher $\gamma_b$ and D2D densities larger than $\lambda_d=0.01$, the BS outage probability tends to 1, which means that none of the transmitted signals by the cellular users can be received correctly. In this regard, to provide better performance for cellular communication with the aforementioned parameters, we need to reduce the transmitting operable D2D users that send on the same cellular channel. 

\begin{figure}
\centering

\includegraphics[width=80mm, height=60mm]{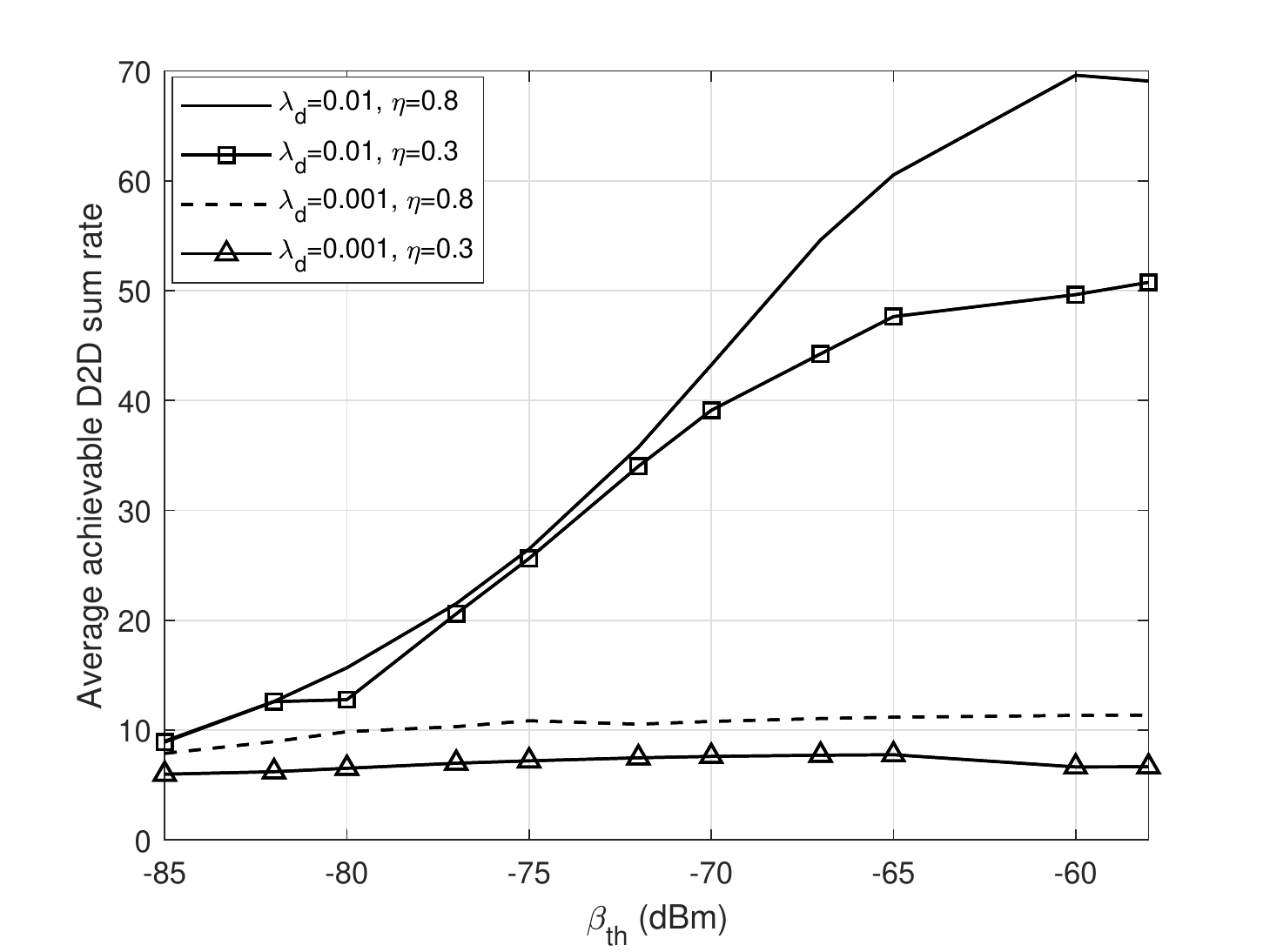}
    \caption{Average achievable D2D sum-rate versus $\beta_{th}$ for ATP scheme.}
\label{rate_atp}
\end{figure} 

In Fig. \ref{rate_atp}, we evaluate the potential tradeoff between D2D outage probability and the number of transmitting operable users. In this regard, we show the average achievable D2D sum-rate for ATP scheme as a function of $\beta_{th}$. Increasing $\beta_{th}$ results in a reduction of the protection radius which also translates to an increase in the density $\lambda_t^{\ATP}$ of transmitting operable D2Ds. It can then be seen that increasing $\beta_{th}$ has two effects on the system performance, which define the trends of D2D sum rate. The first effect is the increase of the number of transmitting operable D2Ds. The second effect is an increased D2D outage probability. Thus, selecting the best $\beta_{th}$ plays an important role in the value of D2D sum rate. As an example, for $\lambda_d=0.01$ and $\eta=0.8$, we can observe the increase of sum-rate with $\beta_{th}$ from -85 dBm to -60 dBm, but after that the sum-rate decreases. For higher $\lambda_d$ and $\eta$, the D2D sum-rate increases, which is more pronounced at higher $\beta_{th}$ due to the increased number of transmitting operable users. 

\section{Conclusion}~\label{sec:conc}

This paper has presented an interference management mechanism with a TDD protocol for a RFEH-based D2D network. The D2D transmitters follow a time-division protocol and they are allowed to transmit during the UL sub-slots, while they harvest energy during the DL sub-slots. With this protocol, a D2D transmitter can harvest sufficient energy to become operable and transmit during the UL sub-slot.

In this regard, we have presented an accurate energy model for analysis of the available energy of D2D transmitters. In order to control the interference created by the operable D2D transmitters, we have proposed two schemes, with fixed and adaptive transmission probabilities, denoted as FTP and ATP, respectively. In FTP, the operable D2D transmitters initiate a transmission with a fixed probability. In ATP, an operable D2D transmitter senses the channel and stays idle if the channel is busy. 
The evaluation results highlight the importance of the EH design parameters as well as the effects of transmission probability on the system performance.
Moreover, the results reveal that channel sensing significantly affects the outage probability and D2D sum-rate of the system. An interesting step for future work would be consideration of multiple cellular users and channels and also the possibility for channel selection for a D2D transmission.
\vspace{-1.8mm}
\begin{appendices}
\vspace{-1mm}
\section{proof of proposition \ref{prop:pftp}}\label{proof:piFTP}

The probability of D2D transmitter being an operable, can be expressed as~\cite{6609136}
\vspace{-1.5mm}
\begin{align}\label{eq:piftp:1}
\pi_{o}^{\FTP}=\lim_{N\rightarrow \infty} \frac{1}{N}\sum_{n=0}^{N}\mathbb{E}\left[X_{n,o}^i\right].
\end{align}

 In order to derive~\eqref{eq:piftp:1}, and by considering the fixed distance between D2D transmitters and BS, we investigate two different cases: $\eta P_b d_{b,i}^{-\alpha}<{P_d p_t^{\FTP}}$ and $\eta P_b d_{b,i}^{-\alpha}>{P_d p_t^{\FTP}}$ which
are respectively corresponds to the cases where the expected of harvested energy for a D2D user in the distance $d_{b,i}$ is lower and higher than the expected of consumed energy.
Under the first condition, we can readily obtain $\pi_{o}^{\FTP}=\frac{\eta P_b d_{b,i}^{-\alpha}}{P_d p_t^{\FTP}}$. On the other hand, when $\eta P_b d_{b,i}^{-\alpha}>{P_d p_t^{\FTP}}$ by using the weak law of large numbers, $\pi_{o}^{\FTP}$ is equal to 1. Therefore, for fixed distance assumption between D2D transmitter and BS,~\eqref{eq:piftp:1} can be obtained as

\begin{align*}
\pi_{o}^{\FTP} =
\min\left(1,\frac{\eta P_b d_{b,i}^{-\alpha}}{P_d p_t^{\FTP}}\right).
\numberthis
\end{align*}

Noticing that D2D transmitter are randomly located in the distance 0 to $R$ with uniform distribution, i.e., $f_{d_{b,i}}(r)=\frac{2r}{R^2}$ and by taking the average over the location of the D2D transmitter, $\pi_{o}^{\FTP}$ is given by

\begin{align*}
\pi_{o}^{\FTP}
=\int_0^R\min\left(1,\frac{\eta P_b r^{-\alpha}}{P_d p_t^{\FTP}}\right) \frac{2r}{R^2}dr,
 \numberthis
 \end{align*}

which after some algebraic manipulation the desired result in~\eqref{energy_ftp} can be obtained.

\section{proof of proposition \ref{pro: BS outage probability,FTP}}\label{proof:BS:FTP}

The BS is in outage if the received SINR at the BS is less than $\gamma_{b}$. Therefore, by invoking~\eqref{sinr_bs_1}, $P_{out,b}^{\FTP}\left(\gamma_{b}\right)$ can be written as

\begin{align*}
        P_{out,b}^{\FTP}\left(\gamma_{b}\right)
        =1-\Pr\left(\frac{P_{c}{|h_{c,b}|^2}d_{c,b}^{-\alpha}}{\sum_{j\in \phi_{t}}  P_d{|{h_{{{j},b}}}|^2}d_{{{j},b}}^{-\alpha}+N_0} \geq \gamma_{b}\right).
\numberthis
\end{align*}

By using the fact that ${|h_{c,b}|^2}$ follows unit mean exponential distribution, $\Pr(\Gamma_{b} \geq \gamma_{b})$ can be expressed as

\begin{align*}\label{bsssoutaghe}
        &\Pr(\Gamma_{b} \geq \gamma_{b})=
        {\mathbbm{E}_{d_{c,b}}} \Bigg[\exp \Big(-\frac{\gamma_
            {b} N_0}{P_c d_{c,b}^{-\alpha}} \Big) 
            \\\times
& {\mathbbm{E}_{\phi_{t},h_{{j},b}}}\Big[\prod_{j\in {\phi_{t}}}\exp\Big(-\frac{{P_d|{h_{{j},b}|^2}
            d_{{j},b}^{-\alpha}}\gamma_{b}}{P_c d_{c,b}^{-\alpha}}\Big)\Big]\Bigg].
            \numberthis
\end{align*}

To this end, by invoking the PGFL of homogeneous PPP distribution\footnote{Let $\nu(x):\mathbb{R}^2\rightarrow[0,1]$ and $\int_{\mathbb{R}^2}{\vert1-\nu(x)\vert dx}<\infty$. When $\Phi$ is Poisson of intensity $\lambda$, the conditional generating functional is $\mathbb{E}\{\prod_{x\in\Phi}\nu(x)\}=\exp\left(-\lambda\int_{\mathbb{R}^2}[1-\nu(x)]dx\right)$ \cite{chiu1996stochastic}.}, we obtain

\begin{equation}
\begin{aligned}
        &\mathbbm{E}_{{\phi_{t}},h_{{j},b}}\!\left[\prod_{j\in {\phi_{t}}}\exp\left(\frac{{-P_d|{h_{{{j},b}}|^2}d_{{{j},b}}^{-\alpha}} \gamma_{b}}{P_{c}d_{c,b}^{-\alpha}}\right)\right]
        \\& { \!=\exp\left(-2\pi\!\lambda_{t}^{\FTP} \mathbbm{E}_{h_{{{j},b}}}\!\left[\int_0^\infty\! \left(\!\!1\!-\!\exp\!\left(\!\frac{-P_d{{|h_{{{j},b\!}}|^2}} \gamma_{b}}{P_{c}d_{c,b}^{-\alpha}{r^{\alpha}}}\!\right)\right)rdr\right]\right)}
        \\ &\!=\exp\left(\!-\pi\!\lambda_{t}^{\FTP} \!\mathbbm{E}_{h_{\!{{j},b}}}\!\!\left[
        \int_0^\infty\!\!\left(\!\!1\!-
       \!\exp\Bigg(\!\frac{-P_d{|h_{{{j},b}}|^2}\gamma_{b}}{P_{c}d_{c, b}^{-\alpha}z}\!\Bigg)\!\right)\!\!\frac{2 z^{\!\frac{2}{\alpha}-1}\!dz}{\alpha}\!\right]\right),
\label{bsoutageeq3}
\end{aligned}
\end{equation}

where in the last equality, we have used the variable change $r^\alpha = z$. By using the product rule 
\begin{equation}
\begin{aligned}
        {{\big(f(x)g(x)\big)}^{'}}= f'(x)g(x)+ f(x)g'(x),
\label{productrule}
\end{aligned}
\end{equation}

where $f'(x)$ and $g'(x)$ are the derivative of $f(x)$ and $g(x)$, we get

\begin{align*}
        &\int_0^\infty\!\left(\!1-\!\exp\!\left(\frac{-P_d {|h_{{{j},b}}|^2}\gamma_{b}\!}{\!P_{c} d_{c,b}^{-\alpha}z\!}\right)\!\right)\!\frac{2z^{\!\frac{2}{\alpha}\!-1\!}}{\alpha}  dz\\&=\int_0^{\infty}\frac{P_d {|h_{{{j},b}}|^2}\gamma_{b}}{P_{c}d^{\!-\alpha}_{c,b}}\exp\!\left(\frac{-P_d {|h_{{{j},b}}|^2}\gamma_{b}v\!}{P_{c}d_{c,b}^{-\alpha}\!}\right)v^{-\frac{2}{\alpha}}dv\\&=\left(\frac{P_d{|h_{{{j},b}}|^2}
        \gamma_{b}\!}{\!P_{c}\!\!}\right)^{\frac{2}{\alpha}}d_{c,b}^2\Gamma\Big(1-\frac{2}{\alpha}\Big),
\numberthis
\label{bsoutageeq24}
\end{align*}

where we have used the variable change $z=\frac{1}{v}$. Moreover, we have      

\begin{align*}
        &\mathbbm{E}_{{\phi_{t}},h_{{j},b}}\!\left[\prod_{j\in {\phi_{t}}}\exp\left(-\frac{{P_d|{h_{{{j},b}}|^2}d_{{{j},b}}^{-\alpha}} \gamma_{b}}{P_{c}d_{c,b}^{-\alpha}}\right)\right]\\
        &=\exp\left(\!-\pi\!\lambda_{t}^{\FTP} \Xi(\alpha) \gamma_{b}^{\frac{2}{\alpha}}\!{d_{c,b}} ^2\!P_d\!^\frac{2}{\alpha} \!P_c\!^\frac{-2}{\alpha}\right).
\numberthis
\label{bsoutageeq25}
\end{align*}

By using~\eqref{bsoutageeq25}, $P_{out,b}^{\FTP}\left(\gamma_{b}\right)$ can be expressed as

\begin{align*}        &P_{out,b}^{\FTP}\left(\gamma_{b}\right)=1-\int_0^R \exp\left(-\frac{\gamma_{b} N_0}{{P_{c}r^{-\alpha}}}\right) \\ &\times\exp\left(\!-\pi\!\lambda_{t}^{\FTP} \Xi(\alpha) \gamma_{b}^{\frac{2}{\alpha}}\!P_d\!^\frac{2}{\alpha} \!P_c\!^\frac{-2}{\alpha}r^2\right) \frac{2r}{R^2}dr.
\numberthis
\end{align*}

To this end, by applying the Gaussian-Chebyshev
quadrature method \cite{hildebrand1987introduction}, the desired result in~\eqref{propositio: BS outage probability,FTP} is obtained. 
\vspace{-2mm}
\section{proof of proposition \ref{Prop:D2D:FTP}\label{proof:D2D:FTP}}
\vspace{-1.5mm}
To derive $P_{out,d}^{\FTP}$, we have

\begin{align*}
    &P_{out,d}^{\FTP}=1-\Pr\left(\Gamma_i \geq \gamma_d\right)\\&=1\!-\!\exp\!\left(\!-\frac{\gamma_{d}N_{0}}{P_d{{r_d}^{-\alpha}}}\right)   \underbrace{\exp\left(-\pi \lambda_{t}^{\FTP} {r_d}^2 \gamma_{d}^{\frac{2}{\alpha}} \Xi(\alpha)\!\right)}_{O_1^{\FTP}}  \\
            & \times \underbrace{\int_0^{2R}\frac{1}{1+\frac{P_c \gamma_d r^{-\alpha}}{{P_d}{r_d}^{-\alpha}}}f_{d_{c,i}}(r)dr}_{O_2^{\FTP}}, 
            \numberthis    
            \label{cheby_d2d_ftp}
\end{align*}
where we have used similar steps as those in the proof of Proposition \ref{pro: BS outage probability,FTP} to derive ${O_1^{\FTP}}$.

Moreover, by applying the Gaussian-Chebyshev
quadrature method, ${O_2^{\FTP}}$ can be derived as
\begin{align*}
 {O_2^{\FTP}}\approx\frac{R\pi}{K}\sum_{k=1}^{K}\frac{\sqrt{1-{x_k}^2}}{1+\frac{\gamma_{d}P_c{b_k}^{-\alpha}}{{P_d}{r_d}^{-\alpha}}}f_{d_{c,i}}\left(b_k\right).
 \numberthis
 \label{cheby_d2d_cell_ftp}
\end{align*}
 
To this end, by substituting \eqref{cheby_d2d_cell_ftp} into the last term of \eqref{cheby_d2d_ftp}, the desired result is obtained.

\section{proof of proposition \ref{Prop:Rate:FTP}}\label{APX:pr:D3D rate}

The achievable D2D sum-rate can be written as

\begin{align*}
        R_{s}^{\FTP} &= \frac{1}{2}\mathbbm{E}\left[\sum_{n=1}^{N_{t}^{\FTP}}\log_2(1+\Gamma_n)\right]\\&=\frac{1}{2}{\lambda_{t}^{\FTP}\pi  R^2}  \mathbbm{E} \left[\log_2\left(1+\Gamma_d\right)\right] \\ &=\frac{1}{2}{\lambda_{t}^{\FTP}\pi R^2} \int_{0}^{\infty} \log_2\left(1+x\right) f_{\Gamma_d}\left(x\right)dx,
        \label{sum rate_proof_ftp}
\numberthis
\end{align*}

where $\mathbbm{E}\left[\log_2\left(1+\Gamma_d\right)\right]$ is the average achievable rate. By applying the product rule, We can write

\begin{align*}
        &\int_{0}^{\infty} \log_2\left(1+x\right) f_{\Gamma_d}\left(x\right)dx \\&=-\int_{0}^{\infty} \log_2\left(1+x\right) \left(\Pr\left({\Gamma_d}\geq x\right)\right)'dx\\&=\frac{1}{\text{Ln}2}\int_{0}^{\infty} \frac{1}{1+x} \left(1-P_{out,d}^{\FTP}\left(x\right)\right)dx.
\numberthis
\end{align*}

To this end, invoking the outage probability expression in~\eqref{d2d_ftp_cheby} yields the desired result.
\end{appendices}

\balance
\bibliographystyle{IEEEtran}
\bibliography{references}

\end{document}